\documentclass{article}

\usepackage{amsmath,leftindex,amsfonts,amssymb,amsthm,graphicx,float}

\newtheorem{theorem}{Theorem}[section]

\newtheorem{proposition}{Proposition}[section]

\newtheorem{lemma}{Lemma}[section]
\newtheorem{corollary}{Corollary}[section]
\theoremstyle{definition}

\renewcommand{\Re}{\operatorname{Re}}

\def\H{{\mathcal{H}}}

\def\RR{\mathbb{R}}

\def\NN{\mathbb{N}}

\def\RA1{\mathfrak{R}_{A'}}
\def\NA1{\mathfrak{N}_{A'}}

\author{Bogdan D. Djordjevi\'c\footnote{Institutions: Mathematical Institute of the Serbian Academy of Sciences and Arts, (MI SASA), Serbia. Visiting researcher at the Institute of
Mathematics and Informatics, Bulgarian Academy of Sciences (IMI-BAS), Bulgaria, and the Faculty of Mathematics and Informatics, Sofia University ``St. Kliment Ohridski" (FMI-SU), Bulgaria.\\
email: bogdan.djordjevic@turing.mi.sanu.ac.rs; bogdan.djordjevic93@gmail.com}, Nikolay A. Ivanov\footnote{Institution: Faculty of Mathematics and Informatics, Sofia University ``St. Kliment Ohridski" (FMI-SU), Bulgaria.\\ email: nivanov@fmi.uni-sofia.bg}}
\date{}

\begin{document}
	
	\title{Variants of the Quantum Phase Operator for the Harmonic Oscillator}
	
	\maketitle
	
	\begin{abstract} 
		We introduce and study quantum phase operators associated with the Quantum Harmonic Oscillator (QHO). We show that these operators are trace-class perturbations of the Susskind–Glogower operators and examine their mathematical and physical properties. The construction is motivated by the physically relevant two-phase case. 
	\end{abstract}
	
	{ MSC 2020 Classification: 47A05; 47B93; 78A97; 78M99; 81Q10.}
	
	{ Keyphrases: Quantum Harmonic Oscillator, Spectral Theory, Quantum Phase Operator.}

\section{Introduction}
Physical relevance of the quantum phase operators has been long recognized and extensively discussed in the literature; see, for example, \cite{NFM91,NFM92}. The notion of phase plays a fundamental role in quantum theory, particularly in systems where only relative phases (phase differences) possess physical significance, reflecting the invariance of quantum states under global phase transformations.

On this occasion, our main concern is mathematical definition and analytic treatment of phase operators occurring in the Quantum Harmonic Oscillator (QHO). In this scenario, a distinctive feature of the phase observable is its intimate connection with the excitation level of the system. In particular, the concept of phase becomes problematic in the vacuum state, for which no physically meaningful phase can be assigned. Conversely, both theoretical considerations and experimental observations indicate that phase can be determined with increasing precision as the number of photons increases. This behavior is especially evident in high-intensity laser fields, where large occupation numbers allow for a more accurate characterization of phase-related properties. 

Following the framework developed in \cite{LSJG}, in this paper we define and study a variant for quantum phase operators that arise from the correspondence between Bose–Einstein statistics and the quantum harmonic oscillator (QHO). In Section \ref{onephase} we develop mathematical foundation for the phase operators, examine their commutator and spectral properties. The theoretical motivation and heuristics of our considerations are then discussed in detail in Section \ref{two phases}. 
	\subsection{Theoretical setup} 
    Recall that (see e.g., \cite{W15}) for the QHO the hamiltonian of the system is
	\begin{equation}\label{H} H=\frac{p^2}{2m}+\frac{m}{2}\omega^2x^2,\end{equation}
	where the position operator $x$ and momentum operator $p$ satisfy the well-known canonical commutation relation:
	\begin{equation}\label{xppx}[x,p]=i \hbar I.\end{equation}
	The following observation follows immediately:
	\begin{proposition} The following relations are true for $H$, $x$, and $p$:
		\begin{equation}\label{Hx} [H,x]=-i\frac{\hbar}{m}p,
		\end{equation}
		\begin{equation}
			\label{Hp}
			[H,p]=i\hbar \omega^2mx.
		\end{equation}
	\end{proposition}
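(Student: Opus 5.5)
The plan is to expand $H$ from \eqref{H} inside each commutator and then reduce everything to the canonical relation \eqref{xppx}. Both commutators are bilinear, so the constant coefficients $\frac{1}{2m}$ and $\frac{m}{2}\omega^2$ factor out. Also $[x^2,x]=0$ and $[p^2,p]=0$, since each operator commutes with its own powers. This gives
\begin{equation*}
[H,x]=\frac{1}{2m}[p^2,x],\qquad [H,p]=\frac{m\omega^2}{2}[x^2,p].
\end{equation*}

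Next I apply the Leibniz rule for commutators, $[AB,C]=A[B,C]+[A,C]B$, with $A=B=p$, $C=x$ in the first case and $A=B=x$, $C=p$ in the second. From \eqref{xppx} we have $[p,x]=-i\hbar I$ and $[x,p]=i\hbar I$. Hence
\begin{equation*}
[p^2,x]=p(-i\hbar I)+(-i\hbar I)p=-2i\hbar p,\qquad [x^2,p]=x(i\hbar I)+(i\hbar I)x=2i\hbar x.
\end{equation*}
Substituting these back gives $[H,x]=\frac{1}{2m}(-2i\hbar p)=-i\frac{\hbar}{m}p$, which is \eqref{Hx}, and $[H,p]=\frac{m\omega^2}{2}(2i\hbar x)=i\hbar\omega^2 m x$, which is \eqref{Hp}.

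The algebra is routine. The only real obstacle is that $x$, $p$ and $H$ are unbounded, so the identities have to be read on a common invariant dense core. I would work on the Schwartz space $\mathcal{S}(\RR)$ (equivalently, the finite span of the Hermite functions). This space is invariant under $x$, $p$ and $H$, and \eqref{xppx} holds there pointwise. On that domain every product above is defined, and the manipulations are legitimate without any closure arguments.
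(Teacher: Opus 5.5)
Your proof is correct and is essentially the paper's argument: both reduce the claim to $[p^2,x]=-2i\hbar p$ and $[x^2,p]=2i\hbar x$ via the canonical relation. The paper gets these by substituting $xp=px+i\hbar$ directly, and you get them from the Leibniz rule, which is the same computation; your remark about reading the identities on an invariant core such as $\mathcal{S}(\RR)$ is a sensible addition that the paper omits.
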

	\begin{proof} from \eqref{xppx} it follows that 
		$$xp^2-p^2x=(xp)p-p(px)=(px+i\hbar)p-p(xp-i\hbar)=2i\hbar p,$$
		therefore
		$$[H,x]=\frac{-1}{2m}(xp^2-p^2x)=\frac{-i\hbar }{m}p.$$
		Similarly, 
		$$x^2p-px^2=x(xp)-(px)x=x(px+i\hbar)-(xp-i\hbar)x=2i\hbar x,$$
		therefore
		$$[H,p]=\frac{m\omega^2}{2}(x^2p-px^2)=i\hbar \omega^2 mx.$$
	\end{proof}
	
It is a known fact that (see \cite{GT}) $H$ from \eqref{H} is self-adjoint, bounded from below, and has a purely point spectrum $\sigma(H)$, comprising out of eigenvalues of the half-integer form:
	\begin{equation}\label{sigmaH}
		\sigma(H)=\left\{\hbar\omega\left(n+\frac{1}{2}\right): n\in\NN_0\right\},
	\end{equation}
	where each eigenvalue is simple, and the corresponding eigen-function is the $n-$th Hermite function $\Psi_n(x)$:
	\begin{equation}\label{eigenH}
		H\psi(x)=\hbar\omega\left(n+\frac{1}{2}\right)\psi(x)\Leftrightarrow \psi(x)=\Psi_n(x), \quad n\in\NN_0,\end{equation}
	with
	\begin{equation}\label{Psin}\Psi_n(x)=\frac{1}{\sqrt{2^n n!}}\left(\frac{m\omega}{\pi\hbar}\right)^{1/4}e^{-\frac{m\omega x^2}{2\hbar}}H_n\left(\sqrt{\frac{m\omega}{\hbar}}x\right),\end{equation}
    where $H_n$ is the physicists' Hermite polynomial, 
\begin{equation}\label{Hermiten}
H_n(z)=(-1)^ne^{z^2}\frac{d^n}{dz^n}\left(e^{-z^2}\right),\quad n\in\NN_0.
\end{equation}
	Combined, by writing $\Psi_n$ as $n$, the eigen-problem \eqref{sigmaH}-\eqref{eigenH} reads
	\begin{equation}\label{Hn}
		H|n\rangle=\hbar \omega\left(n+1/2\right)|n\rangle, \quad, n\in\NN_0.
	\end{equation}
	Accordingly, the ladder operators are defined as:
	\begin{equation}\label{a}
		a=\frac{1}{\sqrt{2m\omega\hbar }}\left(-i\hbar \frac{\partial}{\partial x}-im\omega x\right)=\frac{1}{\sqrt{2m\omega\hbar }}\left(p-im\omega x\right),
	\end{equation}
	\begin{equation}\label{adag}
		a^\dagger=\frac{1}{\sqrt{2m\omega\hbar }}\left(-i\hbar \frac{\partial}{\partial x}+im\omega x\right)=\frac{1}{\sqrt{2m\omega\hbar }}\left(p+im\omega x\right),
	\end{equation}
	which satisfy the canonical commutation relations:
	\begin{equation}\label{aadag} [a,a^\dagger]=I\end{equation}
	\begin{equation}\label{Ha}[H,a]=-\hbar \omega a,\quad [H,a^\dagger]=\hbar \omega a^\dagger.\end{equation}
	Ergo, $a$ is a lowering operator while $a^\dagger$ is a raising operator, in the sense that:
	\begin{equation}\label{an}
		\begin{cases}a| n\rangle=\sqrt{n}|n-1\rangle, \ n\geq 1; \quad a|0\rangle=0;\\a^\dagger|n=\sqrt{n+1}|n+1\rangle.\end{cases}
	\end{equation}
	The operators $a$ and $a^\dagger$ produce the number operator $N$, 
	\begin{equation}
		\label{N}
		N=a^\dagger a
	\end{equation}
	which is self-adjoint, bounded from below, and
	\begin{equation}\label{NH}
		H=\hbar \omega\left(N+\frac{1}{2}I\right)=\hbar \omega\left(a^\dagger a+\frac{1}{2}I\right)=\hbar \omega\left(aa^\dagger-\frac{1}{2}I\right).\end{equation}
	Finally, by adding and subtracting \eqref{a} from \eqref{adag} respectively, we get
	\begin{equation}\label{xpaadag}
		p=\sqrt{\frac{m\omega\hbar}{2}}(a+a^\dagger),\quad x=i\sqrt{\frac{\hbar}{2m\omega}}\left(a-a^\dagger\right).
	\end{equation}
	The latter shows that $a+a^\dagger$ and $i(a-a^\dagger)$ are self-adjoint operators. 
	\par 
	The Quantum phase operator was considered and theoreticized in \cite{LSJG}. Later on it was studied by many 
	scholars, e.g. \cite{CN65} and \cite{BP90}. 
	\par

	\subsubsection{Positive square root and its inverse}
	Since $H$ is a positive operator, there exists a unique positive square root $H^{1/2}$, defined on the same domain as the operator $H$ is. This is guaranteed by the Kato theorem below:
	\begin{theorem}[\cite{Kato}]
		Let $E: D_E\mapsto\H$ be a strictly accretive linear operator, $D_E\subset\H$. Then there exists a unique strictly accretive square root $E^{1/2}$ defined in $D_E$, such that $\forall u\in D_E$: $\left(E^{1/2}\right)^2u=Eu$, and
		\begin{equation}\label{sqrtE}
			E^{1/2}u=\frac{1}{\pi}\int_0^\infty \left(\sqrt{t}\right)^{-1}E\left(t+E\right)^{-1}udt.
		\end{equation}
	\end{theorem}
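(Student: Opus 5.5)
We prove the statement by building $E^{1/2}$ from the Balakrishnan integral \eqref{sqrtE} and then checking, in order: that the integral converges, that the resulting operator squares to $E$, that it is strictly accretive, and that it is unique.

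\textbf{Step 1 (resolvent bound).} Since $E$ is strictly accretive, $\Re\langle Eu,u\rangle\ge \delta\|u\|^2$ for some $\delta>0$. We take $E$ to be m-accretive, i.e. closed with $-t$ in its resolvent set for every $t\ge 0$; without this the resolvents in \eqref{sqrtE} are undefined, and it holds in our application since $H$ is self-adjoint and positive. From
\[
\Re\langle (t+E)u,u\rangle\ge (t+\delta)\|u\|^2
\]
we get $\|(t+E)^{-1}\|\le (t+\delta)^{-1}$. It follows that $E(t+E)^{-1}=I-t(t+E)^{-1}$ has norm at most $2$.

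\textbf{Step 2 (convergence).} For $u\in D_E$ we write $E(t+E)^{-1}u=(t+E)^{-1}Eu$, whose norm is $O(t^{-1})\|Eu\|$ as $t\to\infty$. Near $t=0$ the integrand is bounded by $2t^{-1/2}\|u\|$. Both pieces are integrable, so \eqref{sqrtE} converges absolutely as a Bochner integral. Call the result $Su$.

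\textbf{Step 3 (squaring).} We show $S$ maps $D_E$ into $D_E$ when $u\in D_{E^2}$, and then compute $S^2u$. The computation is a double integral over $(t,s)$. The resolvents commute, and the resolvent identity gives
\[
(t+E)^{-1}(s+E)^{-1}=\frac{(t+E)^{-1}-(s+E)^{-1}}{s-t}.
\]
With the scalar evaluation $\frac{1}{\pi^2}\int_0^\infty\!\!\int_0^\infty \frac{\lambda^2}{\sqrt{ts}\,(t+\lambda)(s+\lambda)}\,dt\,ds=\lambda$, done through the resolvent identity, this yields $S^2u=Eu$. Because $S$ commutes with $E$, the identity then extends to $u\in D_E$ by closure.

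A cleaner alternative is to note that for self-adjoint $E$, such as $H$, the spectral theorem reduces everything to the scalar identity
\[
\frac{1}{\pi}\int_0^\infty \frac{\lambda}{\sqrt t\,(t+\lambda)}\,dt=\sqrt\lambda .
\]
This identity follows from the substitution $t=\lambda s^2$, which turns it into $\frac{2}{\pi}\int_0^\infty\frac{ds}{1+s^2}$.

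\textbf{Step 4 (accretivity).} Using $\Re\langle E(t+E)^{-1}u,u\rangle\ge 0$ gives $\Re\langle Su,u\rangle\ge 0$. Strict positivity comes from sharpening this with the bound from Step 1.

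\textbf{Step 5 (uniqueness).} If $T$ is another accretive square root, then $T$ commutes with $E=T^2$, hence with every $(t+E)^{-1}$, hence with $S$. From $(S-T)(S+T)=S^2-T^2=0$ on a suitable core, and since $S+T$ is strictly accretive and therefore injective with dense range, we conclude $S=T$.

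\textbf{Main obstacle.} The hardest steps are the domain bookkeeping in Step 3 for general, non-self-adjoint accretive $E$ and the commutation argument in Step 5. This is exactly where Kato's theory is needed. For our purposes $E=H$ is self-adjoint, so we fall back on the spectral-theorem route, where all of this becomes the scalar computation above.
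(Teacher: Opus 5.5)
\emph{Step 3 fails for non-self-adjoint $E$, and the fallback does not reach the statement.}

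The paper gives no proof here. It quotes Kato's square-root theorem, which is stated for \emph{m-accretive} $E$ and yields an m-accretive $E^{1/2}$ for which $D_E$ is only a \emph{core}. You were right to add m-accretivity, and Steps 1, 2 and 4 are fine.

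Your construction, however, runs into the second discrepancy and passes over it. The integral defines $S$ on $D_E$, but $S$ does not map $D_E$ into $D_E$: for $E=H$, the vector $u=\sum_{n\ge1}n^{-2}|n\rangle$ lies in $D_H$ while $H^{1/2}u\notin D_H$. So $S^2u$ is undefined for general $u\in D_E$, and ``$S^2u=Eu$ on $D_E$ by closure'' is really a statement about $\overline{S}$, whose domain is strictly larger than $D_E$. What can be proved is therefore not ``$E^{1/2}$ defined in $D_E$'' but ``$D_E$ is a core for $E^{1/2}$'', and Steps 4--5 must be carried out for $\overline{S}$.

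Your self-adjoint fallback gives existence and the formula, but not uniqueness in the class the statement names. A strictly accretive square root of $H$ is not a priori self-adjoint, so the spectral theorem for $H$ says nothing about it until you supply a commutation argument.

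For non-self-adjoint $E$, the two steps you defer to Kato are genuine gaps, not bookkeeping.

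\textbf{Step 3.} On $D_E$ you have $S=BE$ with $B=\frac1\pi\int_0^\infty t^{-1/2}(t+E)^{-1}\,dt$, norm-convergent by your Step 1. So the whole content of Step 3 is the identity $B^2=E^{-1}$. The scalar identity cannot be transported to a non-normal $E$. After the resolvent identity, the two pieces $\frac{(t+E)^{-1}}{s-t}$ and $\frac{(s+E)^{-1}}{s-t}$ are each non-integrable along the diagonal $s=t$, so they cannot be integrated separately. What makes the computation work is to substitute $s=xt$ first:
\begin{itemize}
\item then $(t+E)^{-1}(xt+E)^{-1}=\frac{(t+E)^{-1}-(xt+E)^{-1}}{(x-1)t}$;
\item the $t$-integral becomes a Frullani integral equal to $\frac{\ln x}{x-1}E^{-1}$, using only $(t+E)^{-1}\to E^{-1}$ as $t\to0$ and $\|(t+E)^{-1}\|\le t^{-1}$;
\item finally $\int_0^\infty\frac{x^{-1/2}\ln x}{x-1}\,dx=\pi^2$ gives $B^2=E^{-1}$.
\end{itemize}
Setting $E^{1/2}:=B^{-1}$, the relation $E^{1/2}u=BEu$ on $D_E$ is exactly the formula.

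\textbf{Step 5.} Strict accretivity gives injectivity but not dense range. The unbounded operator $S+T$ need not be closed or maximal. And equality $S=T$ on a dense set does not identify two closed operators unless that set is a common core. Your $(S-T)(S+T)=0$ idea does work if you run it on bounded inverses:
\begin{itemize}
\item a closed accretive $T$ with $T^2=E$ is bijective, since $E$ is;
\item hence $T^{-1}$ is bounded, accretive, and commutes with every $(t+E)^{-1}$, and therefore with $B$;
\item $\Re\langle Bu,u\rangle\ge\frac1\pi\int_0^\infty t^{-1/2}(t+\delta)\|(t+E)^{-1}u\|^2\,dt>0$ for $u\neq0$, and likewise for $B^*$, so $B+T^{-1}$ is injective with dense range;
\item then $(B-T^{-1})(B+T^{-1})=B^2-T^{-2}=0$ forces $T=B^{-1}$.
\end{itemize}
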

	Consequently, the operator $H^{1/2}$ as such exists, defined in the same set $D_H$ as $H$, and obeys to the formula \eqref{sqrtE} above:
	\begin{equation}\label{sqrtH}
		H^{1/2}u=\frac{1}{\pi}\int_0^\infty \left(\sqrt{t}\right)^{-1}H\left(t+H\right)^{-1}udt.
	\end{equation}
	
	Recall that Hermite polynomials comprise the eigen-functions for $H$ that form its corresponding egienstates, denoted as \eqref{Hn}. Then by the spectral mapping theorem we have
	\begin{equation}\label{}
		{H}^{1/2}|n\rangle=\sqrt{\omega\hbar \left(n+1/2\right)}|n\rangle,\quad n\in\NN_0.
	\end{equation}
	Therefore, its spectrum is also bounded from below, and by the closed graph theorem, the operator $H^{1/2}$ has a bounded inverse, and due to associativity one has that
	$$\left(H^{1/2}\right)^{-1}=\left(H^{-1}\right)^{1/2},$$
	where $H^{-1}$ is the bounded inverse of $H$ and $\left(H^{-1}\right)^{1/2}$ is the unique positive square root of $H^{-1}$. In that sense, the operator
	\begin{equation}\label{H-1/2}
		H^{-1/2}:=\left(H^{1/2}\right)^{-1}=\left(H^{-1}\right)^{1/2}\end{equation}
	is well-defined bounded linear operator, and by the spectral mapping theorem yet again it follows that
	\begin{equation}\label{H-1/2n}
		H^{-1/2}|n\rangle=\frac{1}{\sqrt{\omega\hbar \left(n+1/2\right)}}|n\rangle, \quad n\in\NN_0.
	\end{equation}
	Therefore, throughout the remaining text we will understand that $H^{-1/2}$ denotes the bounded inverse of the positive square root of the given Hamiltonian, satisfying \eqref{H-1/2}--\eqref{H-1/2n}.
	\par

	\section{One oscillator phase}\label{onephase}
	\subsection{Exact Sine and Cosine Operators}
	Ideally, one would want to define time-like operators $T_0$ and $T_0^\dagger$ in the following manner:
	\begin{equation}\label{TTdag}
		T_0:=\frac{\sqrt{\hbar \omega}}{2}\left(aH^{-1/2}+H^{-1/2}a\right),\quad T_0^\dagger:=\frac{\sqrt{\hbar \omega}}{2}\left(a^\dagger H^{-1/2}+ H^{-1/2}a^\dagger\right).
	\end{equation} 
	This is well-motivated by the two phase case in Section \ref{two phases}. 
	\par
	By the virtue of \eqref{H-1/2} and \eqref{H-1/2n}, the operators $T_0$ and $T_0^\dagger$ are well-defined and bounded. Additionally, by \eqref{TTdag}, they preserve the lowering-raising properties inherited from the operators $a$ and $a^\dagger$, respectively:
	\begin{lemma} The operators $T_0$ and $T_0^\dagger$ satisfy 
		\begin{equation}\label{HT0}
			[H,T_0]=-\hbar \omega T_0,\quad [H,T_0^\dagger]=\hbar \omega T_0^\dagger.
		\end{equation}
	\end{lemma}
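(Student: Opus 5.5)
The plan is to derive both identities from \eqref{Ha} by showing that $H$ and $H^{-1/2}$ commute, and then to cross-check the result on the eigenbasis $\{|n\rangle\}$.

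\textbf{Commutation of $H$ and $H^{-1/2}$.} By \eqref{Hn} and \eqref{H-1/2n}, both operators are diagonal in the orthonormal basis $\{|n\rangle\}_{n\in\NN_0}$. Hence $HH^{-1/2}u=H^{-1/2}Hu$ for every $u\in D_H$. Alternatively, $H^{-1/2}$ is a bounded Borel function of $H$, so it commutes with $H$ on $D_H$.

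\textbf{Algebraic computation.} For each summand of $T_0$ we expand the commutator:
\[
[H,aH^{-1/2}] = [H,a]H^{-1/2} + a[H,H^{-1/2}] = -\hbar\omega\, aH^{-1/2},
\]
\[
[H,H^{-1/2}a] = [H,H^{-1/2}]a + H^{-1/2}[H,a] = -\hbar\omega\, H^{-1/2}a .
\]
Adding these and multiplying by $\tfrac{\sqrt{\hbar\omega}}{2}$ gives $[H,T_0]=-\hbar\omega T_0$. The same computation with $[H,a^\dagger]=\hbar\omega a^\dagger$ gives $[H,T_0^\dagger]=\hbar\omega T_0^\dagger$. One can also take adjoints of the first identity, since $H$ is self-adjoint and $H^{-1/2}$ is self-adjoint and bounded.

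\textbf{Domain issues.} The one delicate point is the domain: $H$ is unbounded, so the commutators are only meaningful on a suitable core. I would therefore verify the identities on $\mathrm{span}\{|n\rangle\}$, which is a core for $H$ and is invariant under $a$, $a^\dagger$, $H^{-1/2}$, $T_0$ and $T_0^\dagger$. On this span everything can be checked directly from \eqref{an} and \eqref{H-1/2n}. For $n\ge1$,
\[
T_0|n\rangle = \frac{1}{2}\left(\frac{1}{\sqrt{n+1/2}}+\frac{1}{\sqrt{n-1/2}}\right)\sqrt n\,|n-1\rangle = c_n|n-1\rangle,
\]
and $T_0|0\rangle=0$. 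Applying $H$ from \eqref{Hn},
\[
[H,T_0]|n\rangle = \hbar\omega\bigl((n-1+\tfrac12)-(n+\tfrac12)\bigr)c_n|n-1\rangle = -\hbar\omega\,T_0|n\rangle .
\]
The analogous computation, with $T_0^\dagger|n\rangle=d_n|n+1\rangle$, yields the second identity.

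I expect this domain bookkeeping to be the only real obstacle; the algebra itself is routine. On the span the relation holds exactly, and by boundedness of $T_0$ together with closedness of $H$ it extends to all $u\in D_H$ with $T_0u\in D_H$.
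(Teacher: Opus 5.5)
Your proposal is correct and matches the paper's proof: the paper likewise derives both identities from \eqref{Ha} via $[H,aH^{-1/2}+H^{-1/2}a]=[H,a]H^{-1/2}+H^{-1/2}[H,a]$. That step tacitly uses the commutation of $H$ with $H^{-1/2}$, which you state explicitly. Your check on $\mathrm{span}\{|n\rangle\}$ plus the closedness argument supplies the domain bookkeeping the paper omits, and it in fact shows $T_0D_H\subset D_H$, so your proviso $T_0u\in D_H$ is automatic.
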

	\begin{proof} Proof follows directly from \eqref{Ha} and a simple algebraic property:
		$$[H,aH^{-1/2}+H^{-1/2}a]=[H,a]H^{-1/2}+H^{-1/2}[H,a].$$ 
		The same procedure goes when we write $a^\dagger$ instead of $a$.
	\end{proof}
	For convenience, we are going to write 
	\begin{equation}
		\label{b} b:=\sqrt{\hbar\omega}a, \quad b^\dagger:=\sqrt{\hbar\omega}a^\dagger.
	\end{equation}
	That way the operators $T_0$ and $T_0^\dagger$ read 
	\begin{equation}
		\label{T0b} T_0=\frac{1}{2}\left(bH^{-1/2}+H^{-1/2}b\right),\quad T_0^\dagger=\frac{1}{2}\left(b^\dagger H^{-1/2}+H^{-1/2}b^\dagger\right).
	\end{equation}
	The following lemma will be useful for the later text:
	\begin{lemma}\label{bproperties} For the operator $b$ introduced in \eqref{b}, the following are true:
		\begin{itemize}
			\item[(a)]\begin{equation}
				\label{Hb}
				[H,b]=-\omega\hbar b ,\quad [H,b^\dagger]=\omega\hbar b^\dagger,\quad [b,b^\dagger]=\hbar\omega I.
			\end{equation}
			\item[(b)] \begin{equation}\label{H-1b}[H^{-1},b]=\hbar\omega H^{-1}bH^{-1},\quad [H^{-1},b^\dagger]=-\hbar\omega H^{-1}b^\dagger H^{-1}.
			\end{equation}
			Consequently, both $[H^{-1},b]$ and $[H^{-1},b^{\dagger}]$ are bounded operators.
			\item[(c)]\begin{equation}\label{H-1/2b}
				[H^{-1/2},b]=\hbar\omega \bar{b},\quad [H^{-1/2},b^\dagger]=-\hbar \omega \bar{b}^\dagger,
			\end{equation}
			where $\bar{b}$ and $\bar{b}^\dagger$ are respectively the unique bounded solutions to 
			\begin{equation}
				\label{barb}
				H^{-1/2}\bar{b}+\bar{b}H^{-1/2}=-H^{-1}bH^{-1},\quad H^{-1/2}\bar{b}^\dagger+\bar{b}^\dagger H^{-1/2}=-H^{-1}b^\dagger H^{-1}.
			\end{equation}
			Consequently, the commutators $[H^{-1/2},b]$ and $[H^{-1/2},b^\dagger]$ are bounded.
		\end{itemize}
	\end{lemma}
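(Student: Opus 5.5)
Part (a) follows directly from the definitions. Since $b=\sqrt{\hbar\omega}\,a$ and $b^\dagger=\sqrt{\hbar\omega}\,a^\dagger$ by \eqref{b}, multiplying \eqref{Ha} by $\sqrt{\hbar\omega}$ gives $[H,b]=-\omega\hbar b$ and $[H,b^\dagger]=\omega\hbar b^\dagger$. Multiplying \eqref{aadag} by $\hbar\omega$ gives $[b,b^\dagger]=\hbar\omega I$. All identities are understood on the span of $\{|n\rangle\}$, which is a common core; there they can also be checked directly against \eqref{an} and \eqref{Hn}.

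For (b) I will use the resolvent-type identity $[H^{-1},X]=-H^{-1}[H,X]H^{-1}$. It is verified by expanding $H^{-1}(HX-XH)H^{-1}=XH^{-1}-H^{-1}X$ on vectors where every term makes sense, for instance finite combinations of eigenstates, since $H^{-1}$ maps these into themselves. Taking $X=b$ and using (a) yields $[H^{-1},b]=\hbar\omega H^{-1}bH^{-1}$, and taking $X=b^\dagger$ yields the sign-reversed formula.

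For boundedness, compute on the basis:
\[
H^{-1}bH^{-1}|n\rangle=\frac{\sqrt{\hbar\omega n}}{\hbar\omega(n+1/2)\,\hbar\omega(n-1/2)}|n-1\rangle .
\]
The coefficients behave like $n^{-3/2}$ and are therefore uniformly bounded. The operator is a weighted shift, so it extends to a bounded operator, and the same holds for $b^\dagger$.

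For (c) the key algebraic step is the factorization
\[
[H^{-1},b]=H^{-1/2}[H^{-1/2},b]+[H^{-1/2},b]H^{-1/2},
\]
which holds because $H^{-1}=(H^{-1/2})^2$. Set $C:=[H^{-1/2},b]/(\hbar\omega)$. Then $C$ solves the Sylvester equation $H^{-1/2}C+CH^{-1/2}=H^{-1}bH^{-1}$. The paper's sign convention in \eqref{barb} should be reconciled here: with the sign written there, $\bar b$ would be $-C$, so I would check the sign by comparing with the explicit formula below and adjust either $\bar b$ or the sign in \eqref{H-1/2b} accordingly.

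For existence, uniqueness and boundedness I will solve the Sylvester equation explicitly in the eigenbasis. Put $\lambda_n=(\hbar\omega(n+1/2))^{-1/2}$. The right-hand side maps $|n\rangle\mapsto c_n|n-1\rangle$, so the unique solution is the weighted shift
\[
C|n\rangle=\frac{c_n}{\lambda_{n-1}+\lambda_n}|n-1\rangle .
\]
Uniqueness holds because $\lambda_j+\lambda_k>0$ for all $j,k$: any solution $X$ of the homogeneous equation satisfies $(\lambda_j+\lambda_k)\langle j|X|k\rangle=0$, hence $X=0$. Alternatively, one can invoke the Heinz/Rosenblum integral $C=\int_0^\infty e^{-tH^{-1/2}}(\cdot)e^{-tH^{-1/2}}dt$, but I will not rely on it, because $H^{-1/2}$ is not bounded below, so that integral's convergence needs care; this is exactly why the basis approach is preferable.

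Boundedness then follows from the decay of the weights: $c_n\sim n^{-3/2}$ and $\lambda_{n-1}+\lambda_n\sim n^{-1/2}$, so the weights are $O(n^{-1})$ and in particular bounded. A direct check also works: $[H^{-1/2},b]|n\rangle=(\lambda_{n-1}-\lambda_n)\sqrt{\hbar\omega n}\,|n-1\rangle$, and $\lambda_{n-1}-\lambda_n\sim n^{-3/2}$ gives bounded weights $O(n^{-1})$. The adjoint case follows by taking adjoints, since $[H^{-1/2},b^\dagger]=-[H^{-1/2},b]^\dagger$.

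The main obstacle is the rigor of the Sylvester-equation step with a non-invertible-from-below coefficient together with the domain issues. The explicit diagonal computation sidesteps both, so I will rely on it.
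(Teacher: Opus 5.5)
Your argument is correct. For (a) and (b) it matches the paper's: both use the identity $[H^{-1},X]=-H^{-1}[H,X]H^{-1}$, and your weighted-shift computation supplies the boundedness that the paper only asserts.

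For (c) you take a different route.
\begin{itemize}
\item The paper writes $[b,H^{-1/2}]=H^{1/2}Y-YH^{1/2}$ with $Y=H^{-1/2}bH^{-1/2}$. It inserts Kato's integral formula for $H^{1/2}$ and uses $[H,Y]=-\hbar\omega Y$ to simplify the integrand. This gives $[H^{-1/2},b]=\frac{\hbar\omega}{\pi}\int_0^\infty t^{1/2}(t+H)^{-1}Y(t+H)^{-1}\,dt$. It then identifies this integral with the cited solution formula of a Sylvester equation in $H^{1/2}$. That equation is regular because $H^{1/2}$ is bounded below by a positive constant.
\item You use $[A^2,X]=A[A,X]+[A,X]A$ with $A=H^{-1/2}$ and solve the resulting equation entrywise in the eigenbasis.
\end{itemize}
The paper's route is basis-free: it uses only $[H,b]=-\hbar\omega b$ and positivity of $H$. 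It does, however, rely on external Sylvester-equation theory. Yours is self-contained and gives existence, uniqueness and the explicit $O(n^{-1})$ weights in one step. Your uniqueness argument via $\lambda_j+\lambda_k>0$ is exactly what is needed. Since $0\in\sigma(H^{-1/2})$, spectral separation is not available for the equation in $H^{-1/2}$.

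Your sign diagnosis is also right, so commit to it rather than leaving the choice open. The correct statement is:
\begin{itemize}
\item $[H^{-1/2},b]=\hbar\omega\bar b$, where $H^{-1/2}\bar b+\bar b H^{-1/2}=+H^{-1}bH^{-1}$;
\item $[H^{-1/2},b^\dagger]=-\hbar\omega\bar b^\dagger$, where $H^{-1/2}\bar b^\dagger+\bar b^\dagger H^{-1/2}=+H^{-1}b^\dagger H^{-1}$.
\end{itemize}
As printed, \eqref{barb} together with \eqref{H-1/2b} is inconsistent with part (b). You can trace where the paper slips. For $\alpha,\beta>0$,
\[
\frac{1}{\pi}\int_0^\infty t^{1/2}(t+\alpha^2)^{-1}(t+\beta^2)^{-1}\,dt=(\alpha+\beta)^{-1}.
\]
So the paper's integral solves $H^{1/2}y+yH^{1/2}=Y$, not the equation with $-H^{1/2}$ that it writes down. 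Conjugating the correct equation by $H^{-1/2}$ gives the plus sign.
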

	\begin{proof} The relations \eqref{Hb} follow immediately from \eqref{aadag} and \eqref{Ha}. The relations \eqref{H-1b} follow from the commutator property:
		$$[H^{-1},b]=H^{-1}(bH-Hb)H^{-1}=-H^{-1}[H,b]H^{-1}=\omega\hbar H^{-1}bH^{-1},$$
		and similarly for $b^\dagger$. In order to prove \eqref{H-1/2b}, we first conduct a similar transform as the previous one,
		$$\begin{aligned}
			[b,H^{-1/2}]=&H^{-1/2}(H^{1/2}b-bH^{1/2})H^{-1/2}=\\
			=&H^{1/2}\left(H^{-1/2}bH^{-1/2}\right)-\left(H^{-1/2}bH^{-1/2}\right)H^{1/2}.
		\end{aligned}$$
		Note that $H^{-1/2}bH^{-1/2}$ is a bounded operator. That being said, the formula \eqref{sqrtH} applies to $H^{1/2}$ and
		\begin{eqnarray}\label{H-12bcomp}
			\begin{aligned}
				&[b,H^{-1/2}]=\\
				&H^{1/2}\left(H^{-1/2}bH^{-1/2}\right)-\left(H^{-1/2}bH^{-1/2}\right)H^{1/2}=\\
				&\frac{1}{\pi}\int_0^\infty t^{-1/2}H\left(t+H\right)^{-1}\left(H^{-1/2}bH^{-1/2}\right)udt-\\
				-&\frac{1}{\pi}\int_0^\infty t^{-1/2}\left(H^{-1/2}bH^{-1/2}\right)H\left(t+H\right)^{-1}udt,
			\end{aligned}
		\end{eqnarray}
		for each $u$ from the domain of $H$. Since the operators at hand leave the domain $D_H$ invariant, we have that for every $t\geq0$ and for every $u\in D_H$:
		\begin{eqnarray}
			\label{H-12bintegrand}\begin{aligned}
				&\left(H\left(t+H\right)^{-1}\left(H^{-1/2}bH^{-1/2}\right)-\left(H^{-1/2}bH^{-1/2}\right)H\left(t+H\right)^{-1}\right)u=\\
				&\left(t+H\right)^{-1}\left(H\left(H^{-1/2}bH^{-1/2}\right)\left(t+H\right)-\left(t+H\right)\left(H^{-1/2}bH^{-1/2}\right)H\right)\left(t+H\right)^{-1}u=\\
				&t\left(t+H\right)^{-1}\left(H\left(H^{-1/2}bH^{-1/2}\right)-\left(H^{-1/2}bH^{-1/2}\right)H\right)\left(t+H\right)^{-1}u=\\
				&t\left(t+H\right)^{-1}\left(H^{-1/2}[H,b]H^{-1/2}\right)\left(t+H\right)^{-1}u=\\
				&-\hbar\omega t(t+H)^{-1} \left(H^{-1/2}bH^{-1/2}\right)(t+H)^{-1}u.
			\end{aligned}
		\end{eqnarray}
		Substituting \eqref{H-12bintegrand} into \eqref{H-12bcomp} gives
		\begin{equation}\label{H-12bcompint}
			[H^{-1/2},b]u=\frac{\hbar\omega}{\pi}\int_0^\infty t^{1/2}(t+H)^{-1}\left(H^{-1/2}bH^{-1/2}\right)(t+H)^{-1}udt,
		\end{equation}
		for every $u$ from the domain of $H$. 
		
		On the other hand, notice that the operator Sylvester equation
		\begin{equation}\label{sylh-12}
			\left(-H^{1/2}\right)y+y\left(-H^{1/2}\right)=H^{-1/2}bH^{-1/2}\end{equation}
		is regular on $D_H$, meaning that its solution is unique, bounded, and is given as 
		\begin{equation}\label{y}
			yu=\frac{1}{\pi}\int_0^\infty t^{1/2}\left(t+\left(-H^{1/2}\right)^2\right)^{-1}\left(H^{-1/2}bH^{-1/2}\right)\left(t+\left(-H^{1/2}\right)^2\right)^{-1}udt,
		\end{equation}
		see \cite{BD0}, \cite{BD1}, \cite{BDND1}, \cite{HMMO}, and \cite{VQP}.
		By writing $\bar{b}$ instead of $y$, the formula \eqref{y} applies and extends to the entire Hilbert space $\H$. Therefore, \eqref{H-12bcompint} reduces to 
		$$[H^{-1/2},b]=\hbar\omega \bar{b}.$$
		An analogous procedure goes for showing that
		$$[H^{-1/2},b^\dagger]=-\hbar\omega \bar{b}^\dagger,$$
		where $\bar{b}^\dagger=\left(\bar{b}\right)^\dagger$.\end{proof}
	
	Since $a$ and $a^\dagger$ do not commute, it is natural to suspect that $T_0$ and $T_0^\dagger$ do not commute either. In fact, since $[T_0,T_0^\dagger]$ must be bounded and self-adjoint, we have the following result: 
	\begin{theorem}
		The operators $T_0$ and $T_0^\dagger$ from \eqref{TTdag} satisfy:
		\begin{equation}
			\label{T0T0dagcomm}
			[T_0,T_0^\dagger]= \hbar\omega H^{-1} -\frac{\hbar\omega}{2}\Re\left(H^{-1}bH^{-1}b^\dagger+(b^\dagger \bar{b}+b\bar{b}^\dagger)H^{-1/2}\right).
		\end{equation}
		
	\end{theorem}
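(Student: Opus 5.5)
We would prove \eqref{T0T0dagcomm} by first eliminating the symmetrized form of $T_0$ with Lemma \ref{bproperties}(c), then expanding the commutator into a few products, and finally rewriting each product in the terms that appear in \eqref{T0T0dagcomm}. Write $K:=H^{-1/2}$. Since $Kb=bK+[K,b]=bK+\hbar\omega\bar b$, \eqref{T0b} becomes
\begin{equation*}
T_0=bK+\frac{\hbar\omega}{2}\bar b,\qquad T_0^\dagger=Kb^\dagger+\frac{\hbar\omega}{2}\bar b^\dagger ,
\end{equation*}
where the second identity is the adjoint of the first. All terms are bounded on $D_H$ and extend to $\H$, so we may work with products of bounded operators throughout, checking identities on $D_H$ and extending by continuity.

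Expanding $[T_0,T_0^\dagger]$ bilinearly gives three groups of terms. The \emph{principal term} is $[bK,Kb^\dagger]=bH^{-1}b^\dagger-Kb^\dagger bK$. The \emph{cross terms} are $\frac{\hbar\omega}{2}\big([bK,\bar b^\dagger]+[\bar b,Kb^\dagger]\big)$. The \emph{quadratic term} is $\frac{\hbar^2\omega^2}{4}[\bar b,\bar b^\dagger]$.

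For the principal term we use $b^\dagger b=H-\frac{\hbar\omega}{2}I$ from \eqref{NH}. This gives $Kb^\dagger bK=I-\frac{\hbar\omega}{2}H^{-1}$. For $bH^{-1}b^\dagger$ we move $H^{-1}$ to the right using \eqref{H-1b}:
\begin{equation*}
bH^{-1}b^\dagger=bb^\dagger H^{-1}-\hbar\omega\, bH^{-1}b^\dagger H^{-1}=I+\frac{\hbar\omega}{2}H^{-1}-\hbar\omega\, bH^{-1}b^\dagger H^{-1},
\end{equation*}
where we used $bb^\dagger=H+\frac{\hbar\omega}{2}I$. The identity terms cancel, and we obtain $\hbar\omega H^{-1}-\hbar\omega\,bH^{-1}b^\dagger H^{-1}$. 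This already produces the leading term $\hbar\omega H^{-1}$ of \eqref{T0T0dagcomm}. The remaining piece is of the same type as the $H^{-1}bH^{-1}b^\dagger$ term; one passes between the two orderings using \eqref{H-1b} once more.

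For the cross terms we again commute $K$ past $b,b^\dagger$ via \eqref{H-1/2b}. Every term is then brought to one of the forms $b^\dagger\bar b K$, $b\bar b^\dagger K$, or their adjoints, plus terms of order $\hbar^2\omega^2$ in $\bar b,\bar b^\dagger$.

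The key structural step is to exploit self-adjointness. Since $[T_0,T_0^\dagger]$ is self-adjoint, it equals its own real part $\Re X=\frac12(X+X^\dagger)$. Hence every non-self-adjoint term may be replaced by its real part, and a term and its adjoint may be merged. Doing this, the cross terms collapse to $-\frac{\hbar\omega}{2}\Re\big((b^\dagger\bar b+b\bar b^\dagger)K\big)$, and the principal remainder becomes $-\frac{\hbar\omega}{2}\Re(H^{-1}bH^{-1}b^\dagger)$. The factor $\frac12$ arises precisely from splitting a term symmetrically between itself and its adjoint.

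The main obstacle is the bookkeeping of the $\bar b$ terms, in particular showing that the quadratic term and the $\hbar^2\omega^2$ leftovers from the cross terms cancel or are absorbed. For this we would use the Sylvester identity \eqref{barb}, $K\bar b+\bar bK=-H^{-1}bH^{-1}$, to trade products $K\bar b$ for $-\bar bK-H^{-1}bH^{-1}$. As a consistency check, all operators involved shift the number basis by a fixed amount, so $[T_0,T_0^\dagger]$ is diagonal in $\{|n\rangle\}$. We would verify the final identity on each $|n\rangle$ using \eqref{an} and \eqref{H-1/2n}. There $\bar b|n\rangle$ is computed explicitly from \eqref{barb} as a multiple of $|n-1\rangle$, with coefficient obtained by dividing by $(\hbar\omega)^{-1/2}\big((n-\frac12)^{-1/2}+(n+\frac12)^{-1/2}\big)$. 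This diagonal computation serves both to confirm the algebra and, if needed, to fix the precise grouping inside the real part.
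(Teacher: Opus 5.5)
\textbf{The identity you are trying to prove is false, so the plan cannot be completed.} The diagonal check you propose as a consistency check would refute it.

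Work in units $\hbar\omega=1$ with your convention $\bar b=[H^{-1/2},b]$. Since $T_0|0\rangle=0$,
\[
[T_0,T_0^\dagger]|0\rangle=T_0T_0^\dagger|0\rangle=\frac14\bigl(\sqrt2+\sqrt{2/3}\bigr)^2|0\rangle=\Bigl(\frac23+\frac1{\sqrt3}\Bigr)|0\rangle .
\]
On the other side, $H^{-1}|0\rangle=2|0\rangle$, $H^{-1}bH^{-1}b^\dagger|0\rangle=\frac43|0\rangle$, $\bar b|0\rangle=0$ and $b\bar b^\dagger H^{-1/2}|0\rangle=\bigl(2-\frac2{\sqrt3}\bigr)|0\rangle$. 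So the right-hand side of \eqref{T0T0dagcomm} gives $\bigl(\frac13+\frac1{\sqrt3}\bigr)|0\rangle$. More generally, right-hand side minus left-hand side acts on every $|n\rangle$ as multiplication by $\frac{1}{4(n-1/2)(n+3/2)}\neq 0$. With the literal sign in \eqref{barb}, $\bar b$ flips sign and the identity still fails: the right-hand side at $|0\rangle$ becomes $\frac73-\frac1{\sqrt3}$.

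\textbf{Where your plan breaks.}
\begin{itemize}
\item Your principal term is correct: $[bK,Kb^\dagger]=\hbar\omega H^{-1}-\hbar\omega H^{-1}bH^{-1}b^\dagger$. But this remainder is already self-adjoint, because $bH^{-1}b^\dagger$ commutes with $H^{-1}$. Since $\Re X=X$ whenever $X=X^\dagger$, ``splitting a term symmetrically between itself and its adjoint'' cannot produce the factor $\frac12$ you need.
\item Your cross terms at $|0\rangle$ equal $+\bigl(\frac2{\sqrt3}-\frac23\bigr)$, not the claimed $-\frac12\bigl(2-\frac2{\sqrt3}\bigr)$.
\item The quadratic term is nonzero: $\frac{\hbar^2\omega^2}{4}[\bar b,\bar b^\dagger]|0\rangle=\bigl(\frac23-\frac1{\sqrt3}\bigr)|0\rangle$.
\item You cannot use both $Kb=bK+\hbar\omega\bar b$ and $K\bar b+\bar bK=-H^{-1}bH^{-1}$. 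Substituting the first into $[H^{-1},b]=K[K,b]+[K,b]K$ and comparing with \eqref{H-1b} forces $K\bar b+\bar bK=+H^{-1}bH^{-1}$. So Lemma \ref{bproperties}(c) as printed is internally inconsistent, and your planned replacement of $K\bar b$ by $-\bar bK-H^{-1}bH^{-1}$ would introduce a sign error.
\end{itemize}

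\textbf{Where the paper's proof breaks.} The paper correctly derives
\[
bH^{-1}b^\dagger-b^\dagger H^{-1}b=\hbar\omega H^{-1}-\hbar\omega H^{-1}\bigl(b^\dagger H^{-1}b+bH^{-1}b^\dagger\bigr).
\]
It then rewrites the bracket as $2\Re(bH^{-1}b^\dagger)$, treating $b^\dagger H^{-1}b$ as the adjoint of $bH^{-1}b^\dagger$. In fact both operators are self-adjoint and distinct: on $|n\rangle$ they act by $\frac{n}{n-1/2}$ and $\frac{n+1}{n+3/2}$ respectively.

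Keeping that bracket intact in the paper's four-term expansion, with $\hbar\omega\bar b=[H^{-1/2},b]$, gives
\[
[T_0,T_0^\dagger]=\hbar\omega H^{-1}-\frac{\hbar\omega}{4}H^{-1}\bigl(bH^{-1}b^\dagger+b^\dagger H^{-1}b\bigr)-\frac{\hbar\omega}{2}\Re\bigl((b^\dagger\bar b+b\bar b^\dagger)H^{-1/2}\bigr).
\]
This identity does hold on every $|n\rangle$: it reduces to $t_{n+1}^2-t_n^2$, where $t_n$ is the coefficient in \eqref{T0nexact}. It is this corrected identity, not \eqref{T0T0dagcomm}, that either your route or the paper's can actually establish.
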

	\begin{proof}
		First, direct computations give
		\begin{eqnarray}\label{T0commcomp}
			\begin{aligned}
				&4[T_,T_0^\dagger]=H^{-1/2}(bb^\dagger-b^\dagger b) H^{-1/2}+\left(bH^{-1}b^\dagger-b^\dagger H^{-1}b\right)+\\
				&[bH^{-1/2},b^\dagger H^{-1/2}]+[H^{-1/2}b,H^{-1/2}b^\dagger].
			\end{aligned}
		\end{eqnarray}
		The first summand in \eqref{T0commcomp} is  evaluated by \eqref{Hb} as
		\begin{equation}\label{firstsum}
			H^{-1/2}(bb^\dagger-b^\dagger b)H^{-1/2}=\hbar\omega H^{-1}.
		\end{equation}
		Next, by writing $bH^{-1}$ as 
		$$bH^{-1}=H^{-1}b-[H^{-1},b]=H^{-1}b-\hbar\omega H^{-1}bH^{-1},$$
		where the latter is obtained by utilizing \eqref{H-1b}, we rewrite the second summand in \eqref{T0commcomp} as
		$$\begin{aligned}&bH^{-1}b^\dagger-b^\dagger H^{-1}b=\\
			&H^{-1}bb^\dagger-\hbar\omega H^{-1}bH^{-1}b^\dagger-b^\dagger H^{-1}b=\\
			&[H^{-1}b,b^\dagger]-\hbar\omega H^{-1}bH^{-1}b^\dagger.
		\end{aligned}$$
		By invoking that
		\begin{equation}\label{ACBcomm}[AB,C]=A[B,C]+[A,C]B,
		\end{equation}
		the latter transforms into
		$$[H^{-1}b,b^\dagger]-\hbar\omega H^{-1}bH^{-1}b^\dagger=H^{-1}[b,b^\dagger]+[H^{-1},b^\dagger]b-\hbar\omega H^{-1}bH^{-1}b^\dagger.$$
		The first two summands of the right-hand-side in the last equation are evaluated by \eqref{Hb} and \eqref{H-1b}, respectively, and combined give
		\begin{eqnarray}\label{secondsumalmost}
			\begin{aligned}
				&bH^{-1}b^\dagger-b^\dagger H^{-1}b=\\
				&\hbar\omega H^{-1}-\hbar\omega H^{-1}b^\dagger H^{-1}b-\hbar\omega H^{-1}bH^{-1}b^\dagger=\\
				&\hbar\omega H^{-1}-2\hbar\omega H^{-1}\Re\left(bH^{-1}b^\dagger\right).
			\end{aligned}
		\end{eqnarray}
		Note that $H^{-1}$ and $bH^{-1}b^\dagger$ $*-$commute: this follows from
		$$\begin{aligned}
			&[H^{-1},bH^{-1}b^\dagger]=H^{-1}bH^{-1}b^\dagger-bH^{-1}b^\dagger H^{-1}=\\
			&\frac{1}{\hbar\omega}\left([H^{-1},b]b^\dagger+b[H^{-1},b^\dagger]\right)=\\
			&\frac{1}{\hbar\omega}\left(H^{-1}bb^\dagger -bH^{-1}b^\dagger +bH^{-1}b^\dagger-bb^\dagger H^{-1}\right)=\\
			&\frac{1}{\hbar\omega}\left(H^{-1}bb^\dagger-bb^\dagger H^{-1}\right)=0,
		\end{aligned}$$
		where the last equality follows from the fact that 
		$$bb^\dagger=\hbar\omega aa^\dagger=H+\frac{\hbar\omega}{2}I,$$
		i.e., $bb^\dagger$ and $H^{-1}$ commute. Dually, $b^\dagger b$ and $H^{-1}$ commute as well, and consequently $H^{-1}$ and $b^\dagger H^{-1}b$ commute. This proves that 
		$$[H^{-1},\Re\left(bH^{-1}b^\dagger\right)]=0.$$
		The two elements $H^{-1}$ and $\Re\left(bH^{-1}b^\dagger\right)$ are bounded, self-adjoint, and commutative, therefore their product is also self-adjoint. In other words, 
		$$H^{-1}\Re\left(bH^{-1}b^\dagger\right)=\Re\left(H^{-1}bH^{-1}b^\dagger\right).$$
		Returning to \eqref{secondsumalmost}, this gives
		\begin{equation}\label{secondsum}
			bH^{-1}b^\dagger-b^\dagger H^{-1}b=
			\hbar\omega H^{-1}-2\hbar\omega \Re\left(H^{-1}bH^{-1}b^\dagger\right).
		\end{equation}
		Finally, the summands $[bH^{-1/2},b^\dagger H^{-1/2}]$ and $[H^{-1/2}b,H^{-1/2}b^\dagger]$ are evaluated by first utilizing the algebraic identity \eqref{ACBcomm} to obtain
		\begin{eqnarray}\label{thirdsum}\begin{aligned}
				&[bH^{-1/2},b^\dagger H^{-1/2}]=b[H^{-1/2},b^\dagger H^{-1/2}]+[b,b^\dagger H^{-1/2}]H^{-1/2}=\\
				&b[H^{-1/2},b^\dagger]H^{-1/2}+[b,b^\dagger]H^{-1}+b^\dagger[b,H^{-1/2}]H^{-1/2}=\\
				&-\hbar\omega b\bar{b}^\dagger H^{-1/2}+\hbar\omega H^{-1}-\hbar\omega b^\dagger \bar{b}H^{-1/2}=\\
				&\hbar\omega H^{-1}-\hbar\omega\left(b\bar{b}^\dagger+b^\dagger \bar{b}\right)H^{-1/2},
		\end{aligned}\end{eqnarray}
		and, analogously,
		\begin{equation}\label{fourthsum}[H^{-1/2}b,H^{-1/2}b^\dagger]=\hbar\omega H^{-1}-\hbar\omega H^{-1/2}\left(\bar{b}b^\dagger+\bar{b}^\dagger b\right).\end{equation}
		Substituting \eqref{firstsum}, \eqref{secondsum}, \eqref{thirdsum}, and \eqref{fourthsum} into \eqref{T0commcomp} gives
		$$[T_0,T_0^\dagger]= \hbar\omega H^{-1} -\frac{\hbar\omega}{2}\Re\left(H^{-1}bH^{-1}b^\dagger+(b^\dagger \bar{b}+b\bar{b}^\dagger)H^{-1/2}\right),$$
		that is, the formula \eqref{T0T0dagcomm} is true. \end{proof}
	Now having proved some fundamental properties for the operators $T_0$ and $T_0^\dagger$, it is clear that the operators 
	\begin{equation}\label{C0} C_0:=\frac{1}{2}\left(T_0+T_0^\dagger\right)\end{equation} and
	\begin{equation}\label{S0}
		S_0=\frac{1}{2i}\left( T_0- T_0^\dagger\right)
	\end{equation}
	are bounded and self-adjoint. Naturally we call the operator $C_0$ the cosine operator while we refer to the operator $S_0$ as the sine operator. 
	\begin{theorem} The operators $C_0$ and $S_0$ from \eqref{C0} and \eqref{S0} satisfy:
		\begin{equation}\label{HS0C0}
			[H,S_0]=i\hbar\omega C_0,\quad [H,C_0]=-i\hbar\omega S_0,\quad [C_0,S_0]=-\frac{i}{2}[T_0,T_0^\dagger].
		\end{equation}
		Consequently, all commutators in \eqref{HS0C0} are skew-symmetric. \end{theorem}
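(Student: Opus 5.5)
The plan is to derive all three identities in \eqref{HS0C0} directly from the definitions \eqref{C0}, \eqref{S0} and the ladder relations \eqref{HT0}. Commutators are bilinear, so no analytic input beyond boundedness of $T_0,T_0^\dagger$ is needed. Since $H$ is unbounded, each identity involving $H$ is read on $D_H$, exactly as \eqref{HT0} is.

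\emph{First two identities.} Using \eqref{HT0},
$$[H,S_0]=\frac{1}{2i}\left([H,T_0]-[H,T_0^\dagger]\right)=\frac{1}{2i}\left(-\hbar\omega T_0-\hbar\omega T_0^\dagger\right)=-\frac{\hbar\omega}{i}C_0=i\hbar\omega C_0 .$$
Likewise,
$$[H,C_0]=\frac{1}{2}\left(-\hbar\omega T_0+\hbar\omega T_0^\dagger\right)=-\frac{\hbar\omega}{2}\left(T_0-T_0^\dagger\right)=-\frac{\hbar\omega}{2}\cdot 2iS_0=-i\hbar\omega S_0 .$$

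\emph{Third identity.} I expand
$$[C_0,S_0]=\frac{1}{4i}\left[T_0+T_0^\dagger,\,T_0-T_0^\dagger\right]$$
into four terms. The diagonal terms $[T_0,T_0]$ and $[T_0^\dagger,T_0^\dagger]$ vanish. The cross terms are $-[T_0,T_0^\dagger]$ and $[T_0^\dagger,T_0]=-[T_0,T_0^\dagger]$, so
$$[C_0,S_0]=\frac{1}{4i}\left(-2[T_0,T_0^\dagger]\right).$$
Simplifying this scalar is the only delicate step, because the sign of the final coefficient must be tracked carefully. With the literal conventions above, $-\frac{2}{4i}=\frac{i}{2}$, which is the \emph{opposite} sign to the stated $-\frac{i}{2}$. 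The stated coefficient is recovered only under the reverse orientation, for example with $S_0=\frac{1}{2i}\left(T_0^\dagger-T_0\right)$ or by reading the third bracket as $[S_0,C_0]$. Under either of these, however, the sign of $[H,S_0]=i\hbar\omega C_0$ would flip as well. I therefore cannot establish the third identity exactly as printed under the literal definitions \eqref{C0}--\eqref{S0}: the computation shows that its coefficient is $+\frac{i}{2}$, and $-\frac{i}{2}$ is correct only under one of the reversed conventions just described. Whichever convention is adopted, the conclusion is that $[C_0,S_0]$ is a purely imaginary multiple $\pm\frac{i}{2}$ of $[T_0,T_0^\dagger]$, which is then given explicitly by \eqref{T0T0dagcomm}.

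\emph{Skew-symmetry.} $C_0$ and $S_0$ are self-adjoint, so $iC_0$ and $iS_0$ are skew-adjoint; hence $[H,S_0]$ and $[H,C_0]$ are skew-symmetric on $D_H$. The operator $[T_0,T_0^\dagger]$ is bounded and self-adjoint, since $\left([T_0,T_0^\dagger]\right)^\dagger=[T_0,T_0^\dagger]$. Consequently any multiple $\pm\frac{i}{2}[T_0,T_0^\dagger]$ is skew-adjoint, whichever sign holds. This agrees with the general fact that the commutator of two self-adjoint operators is skew-symmetric.
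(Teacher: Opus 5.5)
Your proposal is correct and is the same direct expansion the paper intends; the paper's proof consists only of the word ``Directly''. Your sign check is also right: with \eqref{C0}--\eqref{S0} one gets $[C_0,S_0]=\frac{1}{2i}[T_0^\dagger,T_0]=\frac{i}{2}[T_0,T_0^\dagger]$. This is confirmed by the matrix elements $\langle 0|[C_0,S_0]|0\rangle=\frac{i}{2}\beta_0^2$ and $\langle 0|[T_0,T_0^\dagger]|0\rangle=\beta_0^2$, where $\beta_0=\frac12\bigl(\sqrt2+\sqrt{2/3}\bigr)$. So the printed coefficient $-\frac{i}{2}$ is a sign error in the statement, and the same sign error appears in the leading term of \eqref{CommCMSM}. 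The first two identities hold as printed, and the skew-symmetry conclusion is unaffected, since $[T_0,T_0^\dagger]$ is self-adjoint and either sign gives a skew-adjoint operator.
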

	
	\begin{proof}
		Directly.
	\end{proof}
	
	In order to get a better sense of how the Sine and Cosine operators $S_0$ and $C_0$ act on Hermite polynomials, we proceed to study the opretaros $T_0$ and $T_0^\dagger$ on the Hermite polynomials $\Psi_n$. By combining \eqref{an} and \eqref{H-1/2n}, we conclude the following:
	
	\begin{eqnarray}\begin{aligned}\label{T0nexact}
			T_0|n\rangle&=\frac{\sqrt{\hbar \omega}}{2}\left(aH^{-1/2}+H^{-1/2}a\right)|n\rangle=\\
			&=\begin{cases}0,\quad n=0,\\\frac{\sqrt{n}}{2}\left(\frac{1}{\sqrt{n+1/2}}+\frac{1}{\sqrt{n-1/2}}\right)|n-1\rangle,
				\quad n\geq1.
			\end{cases}	
	\end{aligned}\end{eqnarray} 
	Conversely, the action of the operator $T_0^\dagger$ on the Hermite polynomials $\Psi_n$ is calculated as
	\begin{eqnarray}\begin{aligned}\label{T0dagnexact}
			T_0^\dagger|n\rangle&=\frac{\sqrt{\hbar \omega}}{2}\left(a^\dagger H^{-1/2}+H^{-1/2}a^\dagger\right)|n\rangle=\\
			&=\frac{\sqrt{n+1}}{2}\left(\frac{1}{\sqrt{n+1/2}}+\frac{1}{\sqrt{n+3/2}}\right)|n+1\rangle,
			\quad n\in\NN_0.		
	\end{aligned}\end{eqnarray}
	It immediately follows that
	\begin{eqnarray}\begin{aligned}\label{TdagTnexact}
			T_0^\dagger T_0|n\rangle&=\frac{\hbar \omega}{4}\left(a^\dagger H^{-1/2}+H^{-1/2}a^\dagger\right)\left(aH^{-1/2}+H^{-1/2}a\right)|n\rangle=\\
			&=\begin{cases}0,\quad n=0,\\\frac{n}{4}\left(\frac{1}{\sqrt{n+1/2}}+\frac{1}{\sqrt{n-1/2}}\right)^2|n\rangle, \quad n\geq1,
			\end{cases}	
	\end{aligned}\end{eqnarray}
	while
	\begin{eqnarray}\begin{aligned}\label{TTdagnexact}
			T_0T_0^\dagger|n\rangle&=\frac{\hbar \omega}{4}\left(aH^{-1/2}+H^{-1/2}a\right)\left(a^\dagger H^{-1/2}+H^{-1/2}a^\dagger\right)|n\rangle=\\
			&=\frac{n+1}{4}\left(\frac{1}{\sqrt{n+1/2}}+\frac{1}{\sqrt{n+3/2}}\right)^2|n\rangle,
			\quad n\in\NN_0.
	\end{aligned}\end{eqnarray}
	Now we compute
	\begin{equation}\label{C0nexact} C_0|n\rangle=\begin{cases}\frac{\sqrt{2}}{4}\left(1+\frac{1}{\sqrt{3}}\right)|1\rangle,\quad n=0,\\\frac{1}{4}\left(\left(\frac{1}{\sqrt{1-\frac{1}{2n}}}+\frac{1}{\sqrt{1+\frac{1}{2n}}}\right)|n-1\rangle+\left(\frac{1}{\sqrt{1-\frac{1}{2(n+1)}}}+\frac{1}{\sqrt{1+\frac{1}{2(n+1)}}}\right)\right)|n+1\rangle,\quad n\in\NN,
		\end{cases}
	\end{equation}
	and
	\begin{equation}
		\label{S0nexact} S_0|n\rangle=\begin{cases}\frac{i\sqrt{2}}{4}\left(1+\frac{1}{\sqrt{3}}\right)|1\rangle,\quad n=0,\\\frac{1}{4i}\left(\left(\frac{1}{\sqrt{1-\frac{1}{2n}}}+\frac{1}{\sqrt{1+\frac{1}{2n}}}\right)|n-1\rangle-\left(\frac{1}{\sqrt{1-\frac{1}{2(n+1)}}}+\frac{1}{\sqrt{1+\frac{1}{2(n+1)}}}\right)\right)|n+1\rangle,\quad n\in\NN,
		\end{cases}
	\end{equation}
	
	\subsection{Asymptotic analysis and eigenfunctions}
	For a sufficiently large $n$, the expressions $(n\pm1/2)^{-1/2}$ transform into
	$$(n\pm1/2)^{-1/2}=\frac{1}{\sqrt{n}}\left(1\pm\frac{1}{2n}\right)^{-1/2}\approx \frac{1}{\sqrt{n}}\left(1\mp\frac{1}{4n}\right),\quad n>>1,$$
	and $T_0$ is asymptotically estimated as
	\begin{equation}
		\label{Tn} T_0|n\rangle\approx |n-1\rangle,\quad n>>1,\end{equation}
	i.e., $T_0$ approximately acts as the lowering operator. 
	By writing $n+1/2=n+1-1/2$ and $n+3/2=n+1+1/2$ we get the same estimate for $T_0^\dagger$, i.e., 
	$$(n+1\pm1/2)^{-1/2}=\frac{1}{\sqrt{n+1}}\left(1\pm\frac{1}{2(n+1)}\right)^{-1/2}\approx \frac{1}{\sqrt{n+1}}\left(1\mp\frac{1}{4(n+1)}\right),$$
	when $n>>1$, therefore $T_0^\dagger$ is asymptotically estimated as
	\begin{equation}\label{Tdagn}T_0^\dagger|n\rangle\approx |n+1\rangle,\quad n>>1,\end{equation}
	i.e., $T_0^\dagger$ is approximately the raising operator. 
	Finally, this implies that
	
	\begin{equation}
		\label{C0nS0n} C_0|n\rangle\approx \frac{1}{2}\left(|n-1\rangle+|n+1\rangle\right),\quad S_0|n\rangle\approx\frac{1}{2i}\left(|n-1\rangle-|n+1\rangle\right),\quad n>>1.
	\end{equation}

	This shows that the operators $T_0$, $T_0^\dagger$, $C_0$, and $S_0$ asymptotically behave (mathematically speaking) as the operators introduced by Susskind  and Glogower in \cite{LSJG}: indeed, the operators
    \begin{equation}\label{TSG} T_{SG}:= (N+1)^{-1/2} a,\quad T^\dagger_{SG}:=a^\dagger(N+1)^{-1/2},\end{equation}
    introduced by Susskind and Glogower in the said paper \cite{LSJG}, behave as the lowering and raising operators:
\begin{equation}\label{TSGn} T_{SG}|n+1\rangle=n,\quad T_{SG}|0\rangle=0,\quad T^\dagger_{SG}|n\rangle=n+1,\quad n\in\NN_0.\end{equation}
Therefore, the cosine and sine operators, defined in \cite{LSJG} in the natural way:    \begin{equation}\label{SGsinecos}C_{SG}:=\frac{1}{2}\left(T_{SG}+T^\dagger_{SG}\right),\quad S_{SG}:=\frac{1}{2i}\left(T_{SG}-T^\dagger_{SG}\right),
    \end{equation}
satisfy the commutator property:
\begin{equation}\label{CSGSSGcomm}
[C_{SG},S_{SG}]|0\rangle =1;\quad [C_{SG},S_{SG}]|n\rangle=0,\ n\in\NN,
\end{equation}
where this rank-one commutator indicates that $C_{SG}$ and $S_{SG}$ cannot be observed simultaneously (see \cite{LSJG} for details), justifying the time-energy uncertianty principle (see below).\\

Given that our operators $T_0$, $T^\dagger_0$, $C_0$, and $S_0$ asymptotically do behave as the respective operators $T_{SG}$, $T^\dagger_{SG}$, $C_{SG}$, and $S_{SG}$ from \cite{LSJG}, in what follows, we proceed to estimate by how much the operators $T_0$, $T_0^\dagger$, $C_0$, and $S_0$ deviate from their asymptotic approximations \eqref{Tn}--\eqref{C0nS0n}. We conduct this analysis by introducing a perturbation operator $M$, which is $H$-bounded and defined in a supset of $D_H$, i.e., $D_M\supset D_H$. The idea is to introduce a perturbed time operator $T_M$ and its adjoint $T_M^\dagger$, provided in the manner that
	
	\begin{equation}
		\label{TM} T_M=\frac{1}{2}\left(H^{-1/2}b+bH^{-1/2}\right)+M,\quad T_M^\dagger=\frac{1}{2}\left(H^{-1/2}b^\dagger+b^\dagger H^{-1/2}\right)+M^\dagger,
	\end{equation}
	
	for the said $H-$bounded operator $M$. As before, one has
	\begin{equation}\label{CM}
		C_M:=\frac{1}{2}\left(T_M+T_M^\dagger\right),\quad\end{equation} \begin{equation}\label{SM}S_M:=\frac{1}{2i}\left(T_M- T_M^\dagger\right).
	\end{equation}
	Note that, due to the time-energy uncertainty principle, the time operators $T_M$ and $T_M^\dagger$, and consequently, the operators $C_M$ and $S_M$, are not uniquely determined at the ground state $|0\rangle$. This is because at the lowest state, the phase $\phi_0$ is not unique but rather experiences arbitrary dispersion over $[0,2\pi)$, indicating that the energy fluctuates even when there are no particles occupying the ground state. As such, the phase $\phi_0$ is often regarded as a quasi-probability distribution. We treat this ambiguity by introducing some arbitrary vectors $v(\phi_0)$ and $u(\phi_0)\in\H$, $\|u\|$, $\|v\|\leq 1$, and we point out that $\phi_0$ cannot be recovered from them (see \cite{LSJG}). Consequently, for fixed (but arbitrary) vectors $v(\phi_0)$ and $u(\phi_0)$, we define
	\begin{equation}\label{c0s0}
		c_0:=\frac{1}{2}\left(v(\phi_0)+u(\phi_0)\right),\quad s_0:=\frac{1}{2i}\left(v(\phi_0)-u(\phi_0)\right).	
		\end{equation}
    In \cite{LSJG} it was assumed that $c_0=\frac{1}{2}|1\rangle$, and $s_0=\frac{i}{2}|1\rangle$, which need not be the only possibility.
	Below we proceed to evaluate admissible operator(s) $M$ for which the cosine operator $C_M$ and the sine operator $S_M$ satisfy the sought properties:
	\\
	\begin{equation}\label{CMn} C_M|n\rangle=\frac{1}{2}\left(|n-1\rangle+|n+1\rangle\right),\quad n\geq1; \quad C_M|0\rangle=c_0.
	\end{equation}
	\begin{equation}\label{SMn}
		S_M|n\rangle=\frac{1}{2i}\left(|n-1\rangle-|n+1\rangle\right),\quad n\geq1;\quad S_M|0\rangle=s_0.
	\end{equation}
	Define the operator $L$ as:
	\begin{equation}
		\label{L} L|n\rangle=\begin{cases} 0,\quad n=0,\\\left(1-\frac{\sqrt{n}}{2}\left(\frac{1}{\sqrt{n-1/2}}+\frac{1}{\sqrt{n+1/2}}\right)\right)|n\rangle,\quad n\in\NN.
		\end{cases}
	\end{equation}
	\begin{lemma} The operator $L$ is self-adjoint and in the trace-class. 
	\end{lemma}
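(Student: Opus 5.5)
L is diagonal in the orthonormal eigenbasis $\{|n\rangle\}_{n\in\NN_0}$ of $H$, with real diagonal entries
$$\lambda_0=0,\qquad \lambda_n=1-\frac{\sqrt{n}}{2}\left(\frac{1}{\sqrt{n-1/2}}+\frac{1}{\sqrt{n+1/2}}\right),\quad n\in\NN.$$
The plan is to reduce both claims to properties of the sequence $(\lambda_n)$. A diagonal operator with real entries is self-adjoint once it is bounded. It is trace-class exactly when $\sum_n|\lambda_n|<\infty$: in that case $|L|$ is diagonal with entries $|\lambda_n|$, and $\operatorname{tr}|L|=\sum_n|\lambda_n|$. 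So it suffices to show that $(\lambda_n)$ is real and absolutely summable. Summability implies boundedness, and then self-adjointness follows from $\langle L|n\rangle,|m\rangle\rangle=\lambda_n\delta_{nm}=\overline{\langle |n\rangle,L|m\rangle\rangle}$ extended by linearity and continuity.

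The key step is the asymptotics of $\lambda_n$. I would write
$$\frac{\sqrt n}{\sqrt{n\pm 1/2}}=\left(1\pm\frac{1}{2n}\right)^{-1/2}$$
and expand to second order, in the spirit of the asymptotic analysis preceding \eqref{Tn}. This gives
$$\left(1\pm x\right)^{-1/2}=1\mp\frac{x}{2}+\frac{3x^2}{8}+O(x^3),\qquad x=\frac{1}{2n}.$$
The first-order terms cancel in the symmetric sum, leaving
$$\frac12\left[(1-x)^{-1/2}+(1+x)^{-1/2}\right]=1+\frac{3x^2}{8}+O(x^4),$$
hence $\lambda_n=-\frac{3}{32n^2}+O(n^{-4})$.

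To make this rigorous rather than heuristic, I would use the convexity of $f(x)=(1+x)^{-1/2}$ on $(-1,\infty)$ together with Taylor's theorem with remainder. Convexity gives $\tfrac12(f(x)+f(-x))\ge f(0)=1$, so $\lambda_n\le 0$. For $0<x\le 1/2$, the Lagrange remainder gives $\tfrac12(f(x)+f(-x))-1\le \tfrac{x^2}{2}\sup_{|\xi|\le 1/2}f''(\xi)$, and $\sup_{|\xi|\le1/2} f''(\xi)=\tfrac34\cdot 2^{5/2}$. Together these give $0\le -\lambda_n\le C/n^2$ with an explicit constant $C$ for all $n\ge1$; the case $n=1$ needs no separate treatment since $x=1/2$ is admissible. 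Therefore $\sum_n|\lambda_n|\le C\sum_n n^{-2}<\infty$.

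The only mild obstacle is this uniform bound. The second-order cancellation of the first-order terms must be made quantitative for every $n$, not only for $n\gg1$, and the convexity-plus-remainder argument handles that. As a by-product, $L$ is negative semidefinite, and $-L$ is a positive trace-class operator with $\operatorname{tr}(-L)=\sum_n|\lambda_n|$.
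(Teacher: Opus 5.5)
Your proof is correct and follows the paper's route: $L$ is diagonal with real entries $\lambda_n=O(n^{-2})$, so $(\lambda_n)\in\ell_1$, which gives both trace-class membership and self-adjointness. Your convexity-plus-Lagrange-remainder bound makes the paper's heuristic Taylor estimate uniform in $n$, and it correctly shows $\lambda_n\le 0$ (for example $\lambda_1\approx -0.115$); the paper instead calls $(\lambda_n)$ a ``real positive'' decreasing sequence, a sign slip that does not affect the conclusion.
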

	\begin{proof} That $L$ is self-adjoint follows from the SVEP and its definition in \eqref{L}. Moreover, by analyzing the real positive sequence
		\begin{equation}
			\label{lambda}
			\lambda_n=1-\frac{\sqrt{n}}{2}\left(\frac{1}{\sqrt{n-1/2}}+\frac{1}{\sqrt{n+1/2}}\right)
		\end{equation}
		we see that $\lambda_n$ is decreasing and tends to zero when $n\to\infty$. By the same estimate for Taylor expansion,
		$$(n\pm1/2)^{-1/2}=\frac{1}{\sqrt{n}}\left(1\pm\frac{1}{2n}\right)^{-1/2}= \frac{1}{\sqrt{n}}\left(1\mp\frac{1}{4n}+O\left(\frac{1}{n^2}\right)\right),\quad n\in\NN,$$
		we have that
		\begin{equation}
			\label{alphaasym}
			\lambda_n=O\left(\frac{1}{n^2}\right),\quad n\to\infty.
		\end{equation}
		This proves that the sequence $\left(\lambda_n\right)_{n\in\NN}$ is in $\ell_1(\RR)$. Equivalently, $L$ is a self-adjoint trace-class operator.
	\end{proof}
	Let $V$ be defined as
	\begin{equation}\label{Vn} V|n\rangle=|n-1\rangle, \ \quad n\in\NN,\quad V|0\rangle=0.
	\end{equation}
	Accordingly, its conjugate $V^\dagger$ is the opposite shift
	\begin{equation}
		\label{Vdagn}V^\dagger|n\rangle=|n+1\rangle, n\in\NN_0.
	\end{equation}
	Note that $V^\dagger$ is the unilateral shift.

	\begin{theorem} For a given $H-$bounded linear operator $M$, let $T_M$, $T^\dagger_M$, $C_M$, and $S_M$ be defined as \eqref{TM}, \eqref{CM}, and \eqref{SM}, respectively. The following claims are equivalent:
		\begin{itemize}
			\item[(a)] The operator $C_M$ satisfies \eqref{CMn} and the operator $S_M$ satisfies \eqref{SMn}.
			\item[(b)] The operator $M$ is a (bounded) trace-class operator, and is exactly of the form 
			\begin{equation}\label{Mn}M|n\rangle=\begin{cases}v(\phi_0),\quad n=0;\\
					VL|n\rangle,\quad n\in\NN,
				\end{cases}
			\end{equation}
			where $L$ is defined as \eqref{L}, and $V$ is the lowering operator \eqref{Vn}.
		\end{itemize}
	\end{theorem}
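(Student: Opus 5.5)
The plan is to translate conditions \eqref{CMn}--\eqref{SMn} into conditions on $T_M$ and $T_M^\dagger$ alone, and then to read off $M$ from the explicit action \eqref{T0nexact} of $T_0$. Since $C_M=\frac12(T_M+T_M^\dagger)$ and $S_M=\frac1{2i}(T_M-T_M^\dagger)$, we have $T_M=C_M+iS_M$ and $T_M^\dagger=C_M-iS_M$. Hence (a) is equivalent to the pair of requirements
$$T_M|n\rangle=|n-1\rangle,\quad T_M^\dagger|n\rangle=|n+1\rangle,\qquad n\geq 1,$$
together with $T_M|0\rangle=c_0+is_0=v(\phi_0)$ and $T_M^\dagger|0\rangle=c_0-is_0=u(\phi_0)$, using \eqref{c0s0}. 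In other words, (a) says precisely that $T_M=V+(\text{correction at }|0\rangle)$: on the span of $\{|n\rangle\}_{n\geq1}$ we need $T_M=V$, and at the ground state we need $T_M|0\rangle=v(\phi_0)$.

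For (a)$\Rightarrow$(b), I subtract $T_0$ using \eqref{T0nexact}. For $n\geq1$,
$$M|n\rangle=T_M|n\rangle-T_0|n\rangle=\Bigl(1-\tfrac{\sqrt n}{2}\bigl((n+\tfrac12)^{-1/2}+(n-\tfrac12)^{-1/2}\bigr)\Bigr)|n-1\rangle=\lambda_n|n-1\rangle=VL|n\rangle.$$
For $n=0$, since $T_0|0\rangle=0$, we get $M|0\rangle=v(\phi_0)$. This is exactly \eqref{Mn}. Because $M$ is $H$-bounded and defined on $D_H$, which contains all finite combinations of the $|n\rangle$, the values on the basis determine $M$ on that dense core. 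Its bounded extension is then $VL+v(\phi_0)\langle 0|$.

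Trace-class follows from the previous lemma on $L$: since $L$ is trace-class and $V$ is bounded, $VL$ is trace-class. The rank-one operator $v(\phi_0)\langle0|$ is also trace-class, so $M$ is a sum of two trace-class operators.

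For (b)$\Rightarrow$(a), I reverse the computation. With $M$ as in \eqref{Mn}, we get $T_M|n\rangle=T_0|n\rangle+\lambda_n|n-1\rangle=|n-1\rangle$ for $n\ge1$, and $T_M|0\rangle=v(\phi_0)$. The adjoint relation in \eqref{TM} then forces $T_M^\dagger=T_0^\dagger+M^\dagger$. Here I compute $M^\dagger=LV^\dagger+|0\rangle\langle v(\phi_0)|$ and combine it with \eqref{T0dagnexact}. Since $T_0^\dagger|n\rangle$ has coefficient $1-\lambda_{n+1}$, we obtain $T_M^\dagger|n\rangle=|n+1\rangle+\langle v(\phi_0)|n\rangle\,|0\rangle$.

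I expect the main obstacle to be the ground-state bookkeeping. Formally, $T_M^\dagger$ as the genuine adjoint of $T_M$ satisfies $\langle 0|T_M^\dagger|n\rangle=\overline{\langle n|v(\phi_0)\rangle}$. This is compatible with $T_M^\dagger|n\rangle=|n+1\rangle$ for $n\ge1$ only if $v(\phi_0)\in\operatorname{span}\{|0\rangle\}$, or else only under the paper's convention that $u(\phi_0)$ is prescribed independently. To get around this, I would interpret $T_M^\dagger$ as defined by \eqref{TM}, with $M^\dagger$ acting on $|0\rangle$ by $u(\phi_0)$ and agreeing with $LV^\dagger$ on $n\geq1$. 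That is the reading under which the ground state is treated as ambiguous, as in the discussion preceding \eqref{c0s0}. With that convention, both directions reduce to the coefficient identity $T_0|n\rangle+\lambda_n|n-1\rangle=|n-1\rangle$ and its adjoint counterpart. Boundedness and trace-class of $M$ then rest entirely on $\lambda_n=O(n^{-2})$ from \eqref{alphaasym}.
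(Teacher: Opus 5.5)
Your (a)$\Rightarrow$(b) is correct. Using $T_M=C_M+iS_M$ gives $M|n\rangle=\lambda_n|n-1\rangle$ and $M|0\rangle=v(\phi_0)$ more directly than the paper, which derives $(M\pm M^\dagger)|n\rangle=(VL\pm LV^\dagger)|n\rangle$ separately and then argues with auxiliary operators $P$ and $F$.

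The gap is in (b)$\Rightarrow$(a). In \eqref{TM}, $M^\dagger$ is the adjoint of $M$; the self-adjointness of $C_M$ and $S_M$ depends on this. Since $T_0=V-VL$, (b) gives $T_M=V+|v\rangle\langle 0|$ and $T_M^\dagger=V^\dagger+|0\rangle\langle v|$, where $v=v(\phi_0)$. So, given (b), (a) holds if and only if two conditions hold: $\langle n|v\rangle=0$ for all $n\ge1$, \emph{and} $u(\phi_0)=|1\rangle+\overline{\langle 0|v\rangle}\,|0\rangle$. You found the first condition but not the second. For any other ground-state data, (b) is true and (a) is false. Hence the equivalence as stated cannot be proved under any convention compatible with \eqref{TM}.

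Declaring $M^\dagger$ to be an independently prescribed operator is not a proof. Condition (b) then says nothing about $T_M^\dagger$, so (a) is being assumed rather than derived. Your prescription is also off: (a) requires $T_M^\dagger|0\rangle=u(\phi_0)$, i.e. $M^\dagger|0\rangle=u(\phi_0)-T_0^\dagger|0\rangle=u(\phi_0)-(1-\lambda_1)|1\rangle$. With $M^\dagger|0\rangle=u(\phi_0)$ you get $C_M|0\rangle=c_0+\tfrac12(1-\lambda_1)|1\rangle\neq c_0$.

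The paper's \eqref{M0Mdag0} contains the same slip. Its proof never checks (b)$\Rightarrow$(a), nor whether $M^\dagger|n\rangle=LV^\dagger|n\rangle$ is compatible with $M|0\rangle=v(\phi_0)$. It treats $M|0\rangle$ and $M^\dagger|0\rangle$ as unrelated unknowns.

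What you can actually prove is a corrected statement: (a) holds iff $M$ has the form \eqref{Mn} and there is a scalar $v_0$ with $v(\phi_0)=v_0|0\rangle$ and $u(\phi_0)=|1\rangle+\bar v_0|0\rangle$.

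For the forward direction, take your argument and add two matrix-element comparisons. The identity $\langle 0|T_M^\dagger|n\rangle=\overline{\langle n|T_M|0\rangle}$ for $n\ge1$ gives the first condition. The identity $\langle m|T_M^\dagger|0\rangle=\overline{\langle 0|T_M|m\rangle}$ for $m\ge0$ gives the second.

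For the reverse direction, $T_M=V+v_0|0\rangle\langle 0|$ and $T_M^\dagger=V^\dagger+\bar v_0|0\rangle\langle 0|$ give \eqref{CMn}--\eqref{SMn} immediately.

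Combined with the normalization $\|u(\phi_0)\|\le1$ imposed before \eqref{c0s0}, this forces $v_0=0$. So the only admissible ground-state data are the Susskind--Glogower ones, $v=0$ and $u=|1\rangle$.
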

	\begin{proof} We start with the lowest eigenstate $|0\rangle$. By \eqref{T0nexact} and \eqref{T0dagnexact} we have
		\begin{equation}\label{SM0comp}
			S_M|0\rangle=s_0\Leftrightarrow \frac{1}{2i}\left(M-M^\dagger\right)|0\rangle-s_0=\frac{1}{4i}\left(\frac{1}{\sqrt{1/2}}+\frac{1}{\sqrt{3/2}}\right)|1\rangle.
		\end{equation}
		On the other hand, we have
		\begin{eqnarray}\label{CM0comp}
			\begin{aligned}
				c_0=&C_M|0\rangle\Leftrightarrow \\
				c_0=&\frac{1}{2}\left(M+M^\dagger\right)|0\rangle+\frac{1}{4}\left(\frac{1}{\sqrt{1/2}}+\frac{1}{\sqrt{3/2}}\right)|1\rangle=M|0\rangle-is_0,
			\end{aligned}
		\end{eqnarray}
		Giving 
		\begin{equation}\label{M0Mdag0}
			M|0
			\rangle=c_0+is_0,\quad M^\dagger|0\rangle=c_0-is_0.
		\end{equation}

		For $n>0$, it follows from \eqref{T0nexact}, \eqref{T0dagnexact}, \eqref{TM}, and \eqref{CM} that:
		\begin{eqnarray}\label{CMncomp}
			\begin{aligned}C_M|n\rangle=&\frac{1}{2}\left(M+M^\dagger\right)|n\rangle+\\
				+&\frac{1}{2}\left(\frac{\sqrt{n}}{2}\left(\frac{1}{\sqrt{n-1/2}}+\frac{1}{\sqrt{n+1/2}}\right)\right)|n-1\rangle+\\
				+&\frac{1}{2}\left(\frac{\sqrt{n+1}}{2}\left(\frac{1}{\sqrt{n+1/2}}+\frac{1}{\sqrt{n+3/2}}\right)\right)|n+1\rangle.
			\end{aligned}
		\end{eqnarray}
		Therefore \eqref{CMn} holds if and only if
		\begin{eqnarray}\label{MCncomp}
			\begin{aligned}
				&\left(M+M^\dagger\right)|n\rangle=\\
				&\left(1-\frac{\sqrt{n}}{2}\left(\frac{1}{\sqrt{n-1/2}}+\frac{1}{\sqrt{n+1/2}}\right)\right)|n-1\rangle+\\
				+&\left(1-\frac{\sqrt{n+1}}{2}\left(\frac{1}{\sqrt{n+1/2}}+\frac{1}{\sqrt{n+3/2}}\right)\right)|n+1\rangle=\\
				=&\left(\lambda_n|n-1\rangle+\lambda_{n+1}|n+1\rangle\right)=\\
				=& \left(VL|n\rangle+LV^\dagger|n\rangle\right)=\\
				=&\left(VL+LV^\dagger\right)|n\rangle.
			\end{aligned}
		\end{eqnarray}
		Since $L$ is trace-class and self-adjoint, while $V$ is a partial isometry, it follows that $VL$ is bounded and $(VL)^\dagger=LV^\dagger$. Then \eqref{MCncomp} transforms into
		\begin{equation}\label{Mnreduced}
			\left(M+M^\dagger\right)|n\rangle=\left(VL+\left(VL\right)^\dagger\right)|n\rangle, \quad n\in\NN.
		\end{equation}
		If $M=VL$ then $M^\dagger=(VL)^\dagger$ and vice-versa. In this case $M$ and $M^\dagger$ are trace-class. Otherwise, if $M=\left(VL+F\right)$, where $F\neq0$, then $F$ is also $H-$bounded with the same domain as $M$, and in that case
		$M^\dagger=\left(VL\right)^\dagger+F^\dagger$. But then
		$$\left(M+M^\dagger\right)|n\rangle=\left(VL+\left(VL\right)^\dagger+F+F^\dagger\right)|n\rangle=\left(VL+\left(VL\right)^\dagger\right)|n\rangle$$
		implying that
		$$\left(F+F^\dagger\right)|n\rangle=0,\quad n\in\NN,$$
		so $F$ is skew-symmetric.

		On the other hand, from \eqref{T0nexact}, \eqref{T0dagnexact}, \eqref{TM}, and \eqref{SM} it follows that for every $n\in\NN$:
		\begin{eqnarray}\label{SMncomp}
			\begin{aligned}
				S_M|n\rangle=&\frac{1}{2}\left(M-M^\dagger\right)|n\rangle+\\
				+&\frac{1}{2}\left(\frac{\sqrt{n}}{2}\left(\frac{1}{\sqrt{n-1/2}}+\frac{1}{\sqrt{n+1/2}}\right)\right)|n-1\rangle-\\
				+&\frac{1}{2}\left(\frac{\sqrt{n+1}}{2}\left(\frac{1}{\sqrt{n+1/2}}+\frac{1}{\sqrt{n+3/2}}\right)\right)|n+1\rangle.
			\end{aligned}
		\end{eqnarray}
		Therefore \eqref{SMn} holds if and only if
		\begin{eqnarray}\label{MSncomp}
			\begin{aligned}
				&\left(M-M^\dagger\right)|n\rangle=\\
				&\left(1-\frac{\sqrt{n}}{2}\left(\frac{1}{\sqrt{n-1/2}}+\frac{1}{\sqrt{n+1/2}}\right)\right)|n-1\rangle-\\
				-&\left(1-\frac{\sqrt{n+1}}{2}\left(\frac{1}{\sqrt{n+1/2}}+\frac{1}{\sqrt{n+3/2}}\right)\right)|n+1\rangle=\\
				=&\left(\lambda_n|n-1\rangle-\lambda_{n+1}|n+1\rangle\right)=\\
				=& \left(VL|n\rangle-LV^\dagger|n\rangle\right)=\\
				=&\left(VL-\left(VL\right)^\dagger\right)|n\rangle.
			\end{aligned}
		\end{eqnarray}
		As before, if $M=VL$ then $M^\dagger=(VL)^\dagger$. In this case $M$ and $M^\dagger$ are trace-class. Otherwise, if $M=VL+P$, where $P\neq0$, then $P$ is also $H-$bounded with the same domain as $M$, and in that case
		$M^\dagger=\left(\left(VL\right)^\dagger-P^\dagger\right)$. But then
		$$\left(M-M^\dagger\right)|n\rangle=\left(VL-\left(VL\right)^\dagger+P-P^\dagger\right)|n\rangle=\left(VL-\left(VL\right)^\dagger\right)|n\rangle$$
		implying that
		$$\left(P-P^\dagger\right)|n\rangle=0,\quad n\in\NN,$$
		so $P$ is a symmetric densely defined operator. Combining the calculations from \eqref{Mnreduced} and \eqref{MSncomp}, we conclude that either $$VL+P-F=VL,$$
		on $D_M$, or 
		$$VL+P-F=LV^\dagger$$
		for some symmetric $P$ and skew-symmetric $F$, both defined in $D_M$. From the former one has that $P-F=0$, i.e., $P=F=0$, therefore \eqref{Mn} is true. On the other hand, if the latter is true, then one applies the involution $\dagger$ and obtains
		$$VL=LV^\dagger+P+F\Leftrightarrow LV^\dagger=VL-F-P,$$
		implying that $P=0$ and consequently $M|n\rangle=VL|n\rangle$ for $n\in\NN$.
	\end{proof}
	\begin{corollary}
		The operators $S_M$ and $C_M$ are bounded, self-adjoint, and differ from $S_0$ and $C_0$, respectively, by trace-class operators. Moreover,
		\begin{equation}\label{CommCMSM}
			[C_M,S_M]=\frac{1}{2i}\left([T_0,T_0^\dagger]+\left(M^\dagger T_0-T_0M^\dagger\right)+\left( M^\dagger T_0-T_0M^\dagger\right)^\dagger\right).
		\end{equation}
		Consequently, the commutator $[C_M,S_M]$ is skew-self-adjoint.
	\end{corollary}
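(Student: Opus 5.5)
The plan is to reduce everything to the preceding theorem, which gives $M$ explicitly, and then to do bookkeeping with commutators and adjoints.

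First, the structure of $C_M$ and $S_M$. By \eqref{TM}, \eqref{CM}, \eqref{SM} together with \eqref{C0}, \eqref{S0},
$$C_M-C_0=\tfrac12\left(M+M^\dagger\right),\qquad S_M-S_0=\tfrac{1}{2i}\left(M-M^\dagger\right).$$
By the previous theorem, $M$ is given by \eqref{Mn}. Restricted to the span of $\{|n\rangle:n\in\NN\}$ it equals $VL$, which is trace-class because $L$ is trace-class and $V$ is a bounded partial isometry. On $|0\rangle$ it acts by the rank-one operator $|0\rangle\mapsto v(\phi_0)$. Hence $M$ is trace-class. Since the trace-class operators form a $\dagger$-closed two-sided ideal, $M^\dagger$, $M\pm M^\dagger$ are trace-class as well, and the differences above are trace-class. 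Boundedness of $C_M$ and $S_M$ then follows because $T_0$, $T_0^\dagger$ are bounded and $M$ is bounded. Self-adjointness follows from the fact that $C_M=\frac12(T_M+T_M^\dagger)$ and $S_M=\frac1{2i}(T_M-T_M^\dagger)$ with $T_M^\dagger=(T_M)^\dagger$. This last identity holds because both summands of $T_M$ are bounded and $\left(\frac12(H^{-1/2}b+bH^{-1/2})\right)^\dagger=T_0^\dagger$ by \eqref{T0b}.

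Second, the commutator. Using bilinearity, $[A+B,A-B]=-2[A,B]$ with $A=T_M$ and $B=T_M^\dagger$, I would write $[C_M,S_M]$ as a scalar multiple of $[T_M,T_M^\dagger]$. This is the same computation that gave the third relation in \eqref{HS0C0}, so I will fix the constant consistently with that theorem. Then I would expand
$$[T_M,T_M^\dagger]=[T_0,T_0^\dagger]+[T_0,M^\dagger]+[M,T_0^\dagger]+[M,M^\dagger].$$
The key observation is that $[M,T_0^\dagger]=\left([T_0,M^\dagger]\right)^\dagger$. Therefore the two cross terms combine into $X+X^\dagger$ with $X=\pm\left(M^\dagger T_0-T_0M^\dagger\right)$, which is exactly the shape appearing in \eqref{CommCMSM}.

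The step I expect to be the main obstacle is reconciling this expansion with \eqref{CommCMSM}. The term $[M,M^\dagger]$ does not appear there, and the overall sign and factor must match \eqref{HS0C0}. I would first check whether $[M,M^\dagger]$ vanishes or can be absorbed using the explicit form $M=VL$ on $\NN$. For $n\geq 2$ one has $MM^\dagger|n\rangle=\lambda_{n+1}^2|n\rangle$ and $M^\dagger M|n\rangle=\lambda_n^2|n\rangle$. This is a diagonal trace-class operator, which is generally nonzero. If it does not vanish, I would record it as an additional self-adjoint trace-class term in the formula, noting that the qualitative conclusions are unaffected.

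Finally, for skew-self-adjointness, the argument does not depend on these constants. For any bounded $A$, $[A,A^\dagger]^\dagger=[A,A^\dagger]$, so $[T_M,T_M^\dagger]$ is self-adjoint. Multiplying it by a purely imaginary scalar therefore yields a skew-self-adjoint operator, and hence $[C_M,S_M]$ is skew-self-adjoint.
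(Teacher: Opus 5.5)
\textbf{There is a genuine gap: the displayed identity is never derived, and it cannot be, because it is false.} The paper gives no proof of this corollary.

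Your treatment of the first sentence and of skew-self-adjointness is fine. You have $C_M-C_0=\frac12(M+M^\dagger)$ and $S_M-S_0=\frac1{2i}(M-M^\dagger)$, with $M=VL+|v(\phi_0)\rangle\langle 0|$ trace class. Skew-self-adjointness needs no formula at all, since $[C,S]^\dagger=[S,C]=-[C,S]$ for bounded self-adjoint $C,S$.

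The problem is \eqref{CommCMSM}. Your plan does not derive it: the constant is borrowed from \eqref{HS0C0}, the sign of the cross terms is hidden behind ``$\pm$'', and $[M,M^\dagger]$ is to be ``recorded''. Track the signs in your own expansion. Put $Y:=M^\dagger T_0-T_0M^\dagger$. Then $[T_0,M^\dagger]=-Y$ and $[M,T_0^\dagger]=-Y^\dagger$. The identity $[A+B,A-B]=-2[A,B]$ gives the prefactor $-\frac{1}{2i}=\frac{i}{2}$, so
\[
[C_M,S_M]=\tfrac{i}{2}\,[T_M,T_M^\dagger]=\tfrac{i}{2}\bigl([T_0,T_0^\dagger]-Y-Y^\dagger+[M,M^\dagger]\bigr).
\]
The right side of \eqref{CommCMSM} is $-\frac{i}{2}\bigl([T_0,T_0^\dagger]+Y+Y^\dagger\bigr)$. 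The cross terms agree, but $[T_0,T_0^\dagger]$ enters with the opposite sign and $[M,M^\dagger]$ is absent.

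Traces settle the matter. Write $T_0|n\rangle=\mu_n|n-1\rangle$, so $\mu_n=1-\lambda_n$ for $n\geq 1$. Then $[T_0,T_0^\dagger]$ is diagonal with entries $\mu_1^2$ and $\mu_{n+1}^2-\mu_n^2=O(n^{-3})$. It is therefore trace class, with telescoping trace $\lim_n\mu_n^2=1$. Both $Y$ and $[M,M^\dagger]$ have trace $0$, because $M$ is trace class. Hence $\mathrm{tr}\,[C_M,S_M]=\frac{i}{2}$, while the right side of \eqref{CommCMSM} has trace $-\frac{i}{2}$.

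A concrete check is the Susskind--Glogower choice $v(\phi_0)=0$. There $T_M=V$ and $[C_M,S_M]=\frac{i}{2}|0\rangle\langle 0|$. The corrected formula reproduces this: its $|0\rangle$-entry is $\frac{i}{2}(\mu_1+\lambda_1)^2=\frac{i}{2}$. The paper's \eqref{CommCMSM} instead gives $-\frac{i}{2}\mu_1(3\mu_1-2)\approx -0.75\,i$ there.

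So do not ``fix the constant consistently with'' \eqref{HS0C0}. Its third relation is itself off by a sign: your identity $[A+B,A-B]=-2[A,B]$ yields $[C_0,S_0]=\frac{i}{2}[T_0,T_0^\dagger]$, not $-\frac{i}{2}[T_0,T_0^\dagger]$.

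Your instinct to keep $[M,M^\dagger]$ is correct. But it means you are proving a corrected identity, namely the display above, not the stated one. Say so explicitly rather than presenting it as a harmless addendum.

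A smaller point: your diagonal description of $[M,M^\dagger]$ presupposes $v(\phi_0)=0$. For general $v$ one has $M^\dagger|n\rangle=\overline{\langle n|v\rangle}|0\rangle+\lambda_{n+1}|n+1\rangle$. The trace argument does not need diagonality, so this does not affect the conclusion.

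The boundedness, self-adjointness, trace-class-difference and skew-self-adjointness claims of the corollary survive unchanged.
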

	The fact that $C_M$ and $S_M$ do not commute is essential in QHO because it indicates that the vectors $v(\phi_0)$ and $u(\phi_0)$ cannot be used to recover the phase $\phi_0$, see \cite{LSJG}. 
	
	\begin{corollary} The operators $C_0$ and $S_0$ are trace-class perturbations of $C_M$ and $S_M$, respectively.
		\end{corollary}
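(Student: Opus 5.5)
The plan is to show directly that $C_M - C_0$ and $S_M - S_0$ are trace-class. We do this by reducing both differences to the perturbation $M$ from the preceding theorem and then checking that $M$ is trace-class. By the definitions \eqref{C0}, \eqref{S0}, \eqref{CM}, \eqref{SM} and \eqref{TM}, we have $T_M - T_0 = M$ and $T_M^\dagger - T_0^\dagger = M^\dagger$. Hence
$$C_M - C_0 = \tfrac{1}{2}\left(M + M^\dagger\right), \qquad S_M - S_0 = \tfrac{1}{2i}\left(M - M^\dagger\right).$$
The trace-class ideal is closed under adjoints and finite linear combinations. It therefore suffices to show that $M$ is trace-class, and the symmetric relation (``$C_0$ is a trace-class perturbation of $C_M$'') follows at once.

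Next we use the equivalence (a)$\Leftrightarrow$(b) of the previous theorem. Since $C_M$ and $S_M$ are assumed to satisfy \eqref{CMn}--\eqref{SMn}, $M$ has the form \eqref{Mn}. We split $M = VL + R$, where $R$ is the rank-at-most-one operator $R = v(\phi_0)\langle 0|$. This works because $VL|0\rangle = 0$ by \eqref{L}, while $M$ agrees with $VL$ on every $|n\rangle$ with $n\in\NN$. The term $R$ is finite rank, hence trace-class. For $VL$, we invoke the lemma that $L$ is self-adjoint and trace-class, with eigenvalues $\lambda_n = O(n^{-2})$. Since $V$ in \eqref{Vn} is a bounded partial isometry and the trace-class operators form a two-sided ideal in the bounded operators, $VL$ is trace-class. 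Indeed, $\|VL\|_1 \le \|V\|\,\|L\|_1 = \sum_n |\lambda_n| < \infty$.

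The only step needing care is the consistency of the value $M|0\rangle$. Relation \eqref{M0Mdag0} gives $M|0\rangle = c_0 + i s_0$, whereas \eqref{Mn} writes $v(\phi_0)$. With \eqref{c0s0} these coincide, since $c_0 + i s_0 = v(\phi_0)$. In either reading, the column $M|0\rangle$ is a single fixed vector of norm at most one, so $R$ has rank at most one whichever convention is used. We must also check that $M^\dagger$ is well defined as a bounded operator, which is immediate once $M$ is shown to be a bounded finite-rank-plus-trace-class operator.

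Combining these steps, $C_M - C_0$ and $S_M - S_0$ are trace-class, and the corollary follows.
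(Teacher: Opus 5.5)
Your proposal is correct and follows the route the paper takes implicitly. The paper states this corollary without proof, as a restatement of the preceding one: $C_M-C_0=\frac{1}{2}(M+M^\dagger)$ and $S_M-S_0=\frac{1}{2i}(M-M^\dagger)$, with $M=VL+v(\phi_0)\langle 0|$ trace-class by the characterization theorem and the lemma on $L$.

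Your rank-one splitting and the check $c_0+is_0=v(\phi_0)$ just make explicit what the paper leaves tacit. The only direction of the theorem you use, (a)$\Rightarrow$(b), is sound: $C_M+iS_M=T_M$ identically, so $M=T_M-T_0$ can be read off directly on each $|n\rangle$.
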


    \begin{corollary}
    The commutator $[T_0,T_0^\dagger]$ is a trace-class operator. Consequently, the commutator $[C_0,S_0]$ is also a trace-class operator, and 
    $H^{-1}$ differs from ${2}\Re\left(H^{-1}bH^{-1}b^\dagger+(b^\dagger \bar{b}+b\bar{b}^\dagger)H^{-1/2}\right)$ by a trace-class operator.
    \end{corollary}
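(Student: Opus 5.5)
The plan is to avoid the abstract formula \eqref{T0T0dagcomm} at first and work directly with the matrix form of $T_0$ in the basis $\{|n\rangle\}$. By \eqref{T0nexact} we have $T_0|n\rangle=(1-\lambda_n)|n-1\rangle$ for $n\geq1$ and $T_0|0\rangle=0$, with $\lambda_n$ as in \eqref{lambda}. Since $V|0\rangle=0$ and $L|0\rangle=0$, this gives the operator identity
\begin{equation*}
T_0=V-VL,\qquad T_0^\dagger=V^\dagger-LV^\dagger,
\end{equation*}
with $V$ from \eqref{Vn} and $L$ from \eqref{L}. This is exactly the decomposition behind the theorem on $M$: there $M|n\rangle=VL|n\rangle$ for $n\in\NN$, and $T_M$ agrees with $V$ on $|n\rangle$ for $n\geq1$.

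Next I would expand the commutator. Using $VV^\dagger=I$ and $V^\dagger V=I-|0\rangle\langle0|$, I get
\begin{equation*}
[T_0,T_0^\dagger]=|0\rangle\langle 0|-2VLV^\dagger+VL^2V^\dagger+V^\dagger VL+LV^\dagger V-LV^\dagger VL.
\end{equation*}
The first term is rank one. Every other term contains at least one factor $L$, and the remaining factors are the bounded operators $V$ and $V^\dagger$. By the lemma above, $L$ is trace-class, and the trace-class operators form a two-sided ideal in the bounded operators. Hence $[T_0,T_0^\dagger]$ is trace-class.

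The second claim then follows from \eqref{HS0C0}: since $[C_0,S_0]=-\frac{i}{2}[T_0,T_0^\dagger]$, it is a scalar multiple of a trace-class operator and is therefore trace-class.

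For the last claim I would combine this with the identity \eqref{T0T0dagcomm}. That identity expresses the trace-class operator $[T_0,T_0^\dagger]$ as $\hbar\omega H^{-1}-\frac{\hbar\omega}{2}\Re(\cdots)$. Dividing by $\hbar\omega$ shows that $H^{-1}$ and $\frac12\Re\left(H^{-1}bH^{-1}b^\dagger+(b^\dagger\bar b+b\bar b^\dagger)H^{-1/2}\right)$ differ by a trace-class operator. Equivalently, $2H^{-1}$ and $\Re(\cdots)$ differ by a trace-class operator.

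I expect the only delicate point to be bookkeeping the normalization against the statement as written. This is worth doing carefully because $H^{-1}$ itself is not trace-class: its eigenvalues are $(\hbar\omega(n+1/2))^{-1}\sim 1/n$, which are not summable. So the assertion says something genuine, namely that the $\Re(\cdots)$ term cancels the non-summable diagonal part of $H^{-1}$. The statement writes this as $H^{-1}$ versus $2\Re(\cdots)$, whereas \eqref{T0T0dagcomm} naturally yields $2H^{-1}$ versus $\Re(\cdots)$ (equivalently $H^{-1}$ versus $\frac12\Re(\cdots)$). These differ by a factor of $4$ in the relative weight, so I would check the constant in \eqref{T0T0dagcomm} before asserting which scaling holds.

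As a sanity check I would also compute the diagonal entries of $[T_0,T_0^\dagger]$ from \eqref{TdagTnexact}–\eqref{TTdagnexact}, namely $(1-\lambda_{n+1})^2-(1-\lambda_n)^2=O(n^{-3})$. This confirms summability independently of the abstract formula.
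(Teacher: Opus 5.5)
Your proof is correct (for the last claim, in the corrected normalization you identify). For the first claim it is a more direct version of the paper's argument. The paper gets trace-class-ness of $[T_0,T_0^\dagger]$ indirectly: it passes to $T_M=T_0+M$, asserts via \eqref{CSGSSGcomm} that $[C_M,S_M]$ is rank one, and reads off $[T_0,T_0^\dagger]$ from \eqref{CommCMSM} as a trace-class perturbation of it. You write $T_0=V-VL$ and expand, obtaining $[T_0,T_0^\dagger]=|0\rangle\langle0|+(\text{terms each containing }L)$.

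The underlying fact is the same, since $VL$ is exactly the trace-class correction that turns $T_0$ into the shift. But your version needs only the lemma on $L$ and the ideal property. It does not depend on the choice of $c_0,s_0$; for general choices $[C_M,S_M]$ is only finite rank, which would still suffice. Nor does it depend on \eqref{CommCMSM}, which as printed has sign slips and omits the trace-class term $[M,M^\dagger]$. It also yields more. $[T_0,T_0^\dagger]$ is diagonal, with entry $(1-\lambda_1)^2$ at $n=0$ and $(1-\lambda_{n+1})^2-(1-\lambda_n)^2$ for $n\geq1$. So $\operatorname{tr}[T_0,T_0^\dagger]$ telescopes to $\lim_n(1-\lambda_n)^2=1$, as for $[V,V^\dagger]=|0\rangle\langle0|$. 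What the paper's route offers in exchange is the conceptual link to the Susskind--Glogower comparison.

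Your concern about the last claim is justified, and it can be settled. Read $\bar b$ as $\frac{1}{\hbar\omega}[H^{-1/2},b]$, which is what \eqref{H-1/2b} and the proof of \eqref{T0T0dagcomm} use. Then $H^{-1}$, $H^{-1}bH^{-1}b^\dagger$ and $(b^\dagger\bar b+b\bar b^\dagger)H^{-1/2}$ are all diagonal with entries $\frac{1}{\hbar\omega n}\bigl(1+O(n^{-1})\bigr)$. Hence $H^{-1}-2\Re(\cdots)$ has diagonal entries $\sim-\frac{3}{\hbar\omega n}$ and is not trace class. By contrast, $2H^{-1}-\Re(\cdots)$ has entries $O(n^{-2})$. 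So the statement as printed is false, your normalization is the right one, and the paper's ``the rest follows from \eqref{T0T0dagcomm}'' overlooks the misplaced factor.

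One further caution: \eqref{T0T0dagcomm} is itself exact only modulo a trace-class operator. In its proof, $H^{-1}(b^\dagger H^{-1}b+bH^{-1}b^\dagger)$ is replaced by $2H^{-1}\Re(bH^{-1}b^\dagger)$. But $b^\dagger H^{-1}b$ and $bH^{-1}b^\dagger$ are two different self-adjoint operators, not adjoints of each other. The resulting error is $\frac{\hbar\omega}{4}H^{-1}(bH^{-1}b^\dagger-b^\dagger H^{-1}b)$; for instance, the $\langle1|\cdot|1\rangle$ entries of the two sides of \eqref{T0T0dagcomm} differ by $\frac15$. This error is diagonal with entries $O(n^{-2})$, so your corrected conclusion survives. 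A rigorous write-up should still either carry this correction term or check $2H^{-1}-\Re(\cdots)$ directly on the diagonal, in the spirit of your sanity check.
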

\begin{proof} From \eqref{CSGSSGcomm} it follows that $[C_M,S_M]$ is a rank-one operator. Then, from formula \eqref{CommCMSM} it follows that $[T_0,T_0^\dagger]$ is a trace-class perturbation of the rank-one operator, i.e., it is a trace-class operator. The rest follows from \eqref{T0T0dagcomm} and from \eqref{HS0C0}.
    \end{proof}
	For more on spectral properties of trace-class perturbations of self-adjoint operators, see e.g. \cite{KS}. 
	
	\section{Difference of two phases} \label{two phases} 
	
	For the standard setup (see \cite{NFM91} and \cite{NFM92}), let us have two monochromatic lasers of the same frequency pointed at a 50:50 beam-splitter. Let the amplitudes of the lasers at the half-silvered mirror (the signals) be $E_1(t) = A_1 e^{i \varphi_1 + i\lambda t}$ and $E_2(t) = A_2 e^{i \varphi_2 + i\lambda t}$. 
	Then the outgoing lasers would have amplitudes at the detectors 
	$E_3(t) = \frac{1}{\sqrt{2}} (E_1(t) + i E_2(t))$ and 
	$E_4(t) = \frac{1}{\sqrt{2}} (i E_1(t) + E_2(t))$. The intensities (or the energies) would be: \\ 
	$I_1 \ = \ |E_1(t)|^2 \ = \ A_1^2$ \\ 
	$I_2 \ = \ |E_2(t)|^2 \ = \ A_2^2$ \\ 
	$I_3 \ = \ |E_3(t)|^2 \ = \ \frac{A_1^2 + A_2^2}{2} + A_1 A_2 \sin(\phi_1 - \phi_2)$ \\ 
	$I_4 \ = \ |E_4(t)|^2 \ = \ \frac{A_1^2 + A_2^2}{2} - A_1 A_2 \sin(\phi_1 - \phi_2)$
	\par 
	Therefore 
	$$I_3 - I_4 \ = \ 2A_1 A_2 \sin(\phi_1 - \phi_2)$$ 
	or equivalently 
	\begin{equation} \label{sin} 
		\sin(\phi_1 - \phi_2) \ = \ \frac{I_3 - I_4}{2A_1 A_2} 
	\end{equation} 
	
	Next, by adding a $\lambda/4$ shifter to the second laser, we end up with a modified 
	$$E_2'(t) \ = \ A_2 e^{i(\phi_2 + \frac{\pi}{2} + \lambda t)} \ = \ i A_2 e^{i\phi_2 + i\lambda t} \ = \ i E_2(t)$$ 
	This yields 
	$$E_3'(t) = \frac{1}{\sqrt{2}} (E_1(t) - E_2(t)) \ \ \ E_4'(t) = \frac{1}{\sqrt{2}} (i E_1(t) + i E_2(t))$$
	And thus 
	$$I_3' \ = \ |E_3'(t)|^2 \ = \ \frac{A_1^2+A_2^2}{2} - A_1A_2\cos(\phi_1 - \phi_2)$$
	$$I_4' \ = \ |E_4'(t)|^2 \ = \ \frac{A_1^2+A_2^2}{2} + A_1A_2\cos(\phi_1 - \phi_2)$$
	Thus 
	$$I_3' - I_4' \ = \ - 2 A_1 A_2 \cos(\phi_1 - \phi_2)$$
	or equivalently 
	\begin{equation} \label{cos} 
		\cos(\phi_1 - \phi_2) \ = \ - \frac{I_3' - I_4'}{2 A_1 A_2} 
	\end{equation} 
	
	When attempting to quantize this, we argue that since all $I_i$ and $I_i'$ represent energies, we can take 
	$I_1 \sim H_1$ and $I_2 \sim H_2$, where $H_1$ and $H_2$ are the Hamiltonians of two identical Harmonic oscillators, acting on two different isomorphic copies $\mathcal{H}_1$ and $\mathcal{H}_2$ of the standard Harmonic oscillator Hilbert space.  
	\par
	Let us promote $E_1$ and $E_2$ to operators representing the quantized version of the above classical experiment and denote them $\hat{E_1}$ and $\hat{E_2}$, respectively. From the above equations, we see that $|\hat{E_1}|^2$, $|\hat{E_2}|^2$ behave like $H_1$ and $H_2$, and $|\hat{E_3}|^2$ and $|\hat{E_4}|^2$ $i=3,4$ behave like the resulting Hamiltonians $H_3$ and $H_4$. There is an ambiguity whether to set $\hat{E_i} \hat{E_i}^* = H_i$ or $\hat{E_i}^* \hat{E_i} = H_i$, for $i=1,2,3,4$. This is because, by using the creation and annihilation operators, we can write 
	\begin{equation} \label{EHa} 
		2 |\hat{E}_i|^2 \ = \ a_i^* a_i + a_i a_i^* \ = \ 2H_i, \ \ i = 1,2.
	\end{equation}
	
	However, for the definition of the quantized version $S_{12}$ of $\sin(\phi_1 - \phi_2)$ 
    there is no such ambiguity: 
	$$ 2 \hat{E_3} \hat{E_3}^* \ = \ \hat{E_1} \hat{E_1}^* - i \hat{E_1} \hat{E_2}^* + i \hat{E_2} \hat{E_1}^* + \hat{E_2} \hat{E_2}^* $$
	$$ 2 \hat{E_4} \hat{E_4}^* \ = \ \hat{E_1} \hat{E_1}^* + i \hat{E_1} \hat{E_2}^* - i \hat{E_2} \hat{E_1}^* + \hat{E_2} \hat{E_2}^* $$
	Therefore 
	$$2[\hat{E_3} \hat{E_3}^* - \hat{E_4} \hat{E_4}^*] \ = \ 2 i[\hat{E_2} \hat{E_1}^* - \hat{E_1} \hat{E_2}^*]$$ 
	On the other hand 
	$$ 2 \hat{E_3}^* \hat{E_3} \ = \ \hat{E_1}^* \hat{E_1} + i \hat{E_1}^* \hat{E_2} - i \hat{E_2}^* \hat{E_1} + \hat{E_2}^* \hat{E_2} $$
	$$ 2 \hat{E_4}^* \hat{E_4} \ = \ \hat{E_1}^* \hat{E_1} - i \hat{E_1}^* \hat{E_2} + i \hat{E_2}^* \hat{E_1} + \hat{E_2}^* \hat{E_2} $$
	Therefore, yet again 
	$$2[\hat{E_3}^* \hat{E_3} - \hat{E_4}^* \hat{E_4}] \ = \ 2 i[ \hat{E_1}^* \hat{E_2} -  \hat{E_2}^*\hat{E_1}]$$ 
	Thus, we can attempt to define the $\sin$ operator as 
	
	$$S_{12} \ \ "=" \ \ \frac{\hat{E_3}^* \hat{E_3} - \hat{E_4}^* \hat{E_4}}{2\sqrt{H_1H_2}} \ = \ \frac{\hat{E_3} \hat{E_3}^* - \hat{E_4} \hat{E_4}^*}{2\sqrt{H_1H_2}}$$
	or equivalently 
	\begin{multline} \label{S12} 
		S_{12} \ \ "=" \ \ i\frac{\hat{E_1} \hat{E_2}^* - \hat{E_2} \hat{E_1}^*}{2\sqrt{H_1H_2}} \ = \ i \frac{\hat{E_2}^* \hat{E_1} - \hat{E_1}^* \hat{E_2}}{2\sqrt{H_1H_2}} \\ 
		"="\ \frac{i}{2} \left[ \left(\frac{\hat{E_1}}{\sqrt{\hat{H_1}}}\right) \left(\frac{\hat{E_2}}{\sqrt{\hat{H_2}}}\right)^*  \ - \ 
		\left(\frac{\hat{E_2}}{\sqrt{\hat{H_2}}}\right) \left(\frac{\hat{E_1}}{\sqrt{\hat{H_1}}}\right)^* \right] \\ = \ 
		\frac{i}{2} \left[ \left(\frac{\hat{E_2}}{\sqrt{\hat{H_2}}}\right)^* \left(\frac{\hat{E_1}}{\sqrt{\hat{H_1}}}\right) \ - \ 
		\left(\frac{\hat{E_1}}{\sqrt{\hat{H_1}}}\right)^* \left(\frac{\hat{E_2}}{\sqrt{\hat{H_2}}}\right) \right]
	\end{multline}
	
	As in the $\sin$ case, for the quantized version $C_{12}$ of $\cos(\phi_1 - \phi_2)$, we observe  
    
	$$2\hat{E_3'} \hat{E_3'}^* \ = \ \hat{E_1} \hat{E_1}^* - \hat{E_1} \hat{E_2}^* - \hat{E_2} \hat{E_1}^* + \hat{E_2} \hat{E_2}^*$$
	$$2\hat{E_4'} \hat{E_4'}^* \ = \ \hat{E_1} \hat{E_1}^* + \hat{E_1} \hat{E_2}^* + \hat{E_2} \hat{E_1}^* + \hat{E_2} \hat{E_2}^*$$
	and so 
	$$2[\hat{E_3'} \hat{E_3'}^* - \hat{E_4'} \hat{E_4'}^*] \ = \ -2[\hat{E_1} \hat{E_2}^* + \hat{E_2} \hat{E_1}^*]$$
	On the other hand 
	$$2\hat{E_3'}^* \hat{E_3'} \ = \ \hat{E_1}^* \hat{E_1} - \hat{E_1}^* \hat{E_2} - \hat{E_2}^* \hat{E_1} + \hat{E_2}^* \hat{E_2}$$
	$$2\hat{E_4'}^* \hat{E_4'} \ = \ \hat{E_1}^* \hat{E_1} + \hat{E_1}^* \hat{E_2} + \hat{E_2}^* \hat{E_1} + \hat{E_2}^* \hat{E_2}$$
	and this also gives 
	$$2[\hat{E_3'}^*\hat{E_3'} - \hat{E_4'}^*\hat{E_4'}] \ = \ -2[\hat{E_1}^*\hat{E_2} + \hat{E_2}^*\hat{E_1}]$$
	So we can attempt to define 
	$$C_{12} \ \ "=" \ \ \frac{\hat{E_3'}^* \hat{E_3'} - \hat{E_4'}^* \hat{E_4'}}{2\sqrt{H_1H_2}} \ = \ \frac{\hat{E_3'} \hat{E_3'}^* - \hat{E_4'} \hat{E_4'}^*}{2\sqrt{H_1H_2}}$$
	Or equivalently
	\begin{multline} \label{C12} 
        C_{12} \ "=" \ \frac{E_1E_2^* + E_2E_1^*}{2\sqrt{\hat{H_1} \hat{H_2}}} \ = \ 
        \frac{\hat{E_1}^* \hat{E_2} +   \hat{E_2}^* \hat{E_1}}{2\sqrt{\hat{H_1} \hat{H_2}}} \\
		"=" \ \frac{1}{2} \left[ \left(\frac{\hat{E_1}}{\sqrt{\hat{H_1}}}\right) \left(\frac{\hat{E_2}}{\sqrt{\hat{H_2}}}\right)^* \ + \ 
		\left(\frac{\hat{E_2}}{\sqrt{\hat{H_2}}}\right) \left(\frac{\hat{E_1}}{\sqrt{\hat{H_1}}}\right)^* \right] \\ = \ 
		\frac{1}{2} \left[ \left(\frac{\hat{E_2}}{\sqrt{\hat{H_2}}}\right)^* \left(\frac{\hat{E_1}}{\sqrt{\hat{H_1}}}\right) \ + \ 
		\left(\frac{\hat{E_1}}{\sqrt{\hat{H_1}}}\right)^* \left(\frac{\hat{E_2}}{\sqrt{\hat{H_2}}}\right) \right]
	\end{multline}
	
	It is known that $\sin(\phi_1 - \phi_2)$ and $\cos(\phi_1 - \phi_2)$ can be measured simultaneously classically (\cite{NFM91} and \cite{NFM92}), so it makes sense to assume that $S_{12}$ and $C_{12}$ would ``almost'' commute. Also, since $S_{12}$ and $C_{12}$ are self-adjoint as defined above, we can assume that they would ``almost'' satisfy the Pythagorean identity. So we write the restrictions: 
    \begin{equation} \label{restrictions} 
	\begin{aligned} 
    S_{12} C_{12} - C_{12} S_{12} \ & \sim  \ 0   \\
	S_{12}^2 + C_{12}^2 \ & \sim  \ 4 I  
    \end{aligned} 
    \end{equation}
	Note that 
	$$T_{0i} \text{ is the symmetrized version of } \frac{\hat{E_i}}{\sqrt{\hat{H_i}}}, \ i=1,2,$$
    where $T_{0i} \ = \ \frac{1}{2}(H_i^{-1/2}a_i + a_iH_i^{-1/2})$. Here $H_1$ and $H_2$ are two identical 
    Harmonic oscillator Hamiltonians with annihilation operators $a_1$ and $a_2$, respectively, acting on the states of two isomorphic Hilbert spaces $\mathcal{H}_1$ and $\mathcal{H}_2$. 
    \par
	So we set 
	\begin{equation} \label{S12T} 
		S_{12} \ = \ \frac{i}{2} [T_{01} T_{02}^* - T_{01} T_{02}^*] \ = \ 
		\frac{i}{2} [T_{02}^* T_{01} - T_{01}^* T_{02}] 
	\end{equation}
	\begin{equation} \label{C12T} 
		C_{12} \ = \ \frac{1}{2} [T_{01} T_{02}^* + T_{01} T_{02}^*] \ = \ 
		\frac{1}{2} [T_{02}^* T_{01} + T_{01}^* T_{02}] 
	\end{equation}
	We want the following conditions to be satisfied, in accord to (\ref{restrictions})
	\begin{equation} \label{sim0} 
		T_{01} T_{01}^* T_{02}^* T_{02} \ - \ T_{01}^* T_{01} T_{02} T_{02}^* \ \sim \ 0
	\end{equation}
	\begin{equation} \label{sim2I} 
		T_{01} T_{01}^* T_{02}^* T_{02} \ + \ T_{01}^* T_{01} T_{02} T_{02}^* \ \sim \ 2 I 
	\end{equation}
	It remains to interpret the meaning of $\sim \ 0$ and the meaning of $\sim \ 2 I$. 
	\par
	On an eigenvector $|m\rangle |n\rangle \in \mathcal{H}_1 \otimes \mathcal{H}_2$ of $H_1 \otimes H_2$, we compute 
	
	\begin{eqnarray}\begin{aligned}\label{sim0}
			[T_{0M1} T_{0M1}^* T_{0M2}^* T_{0M2} \ - \ T_{0M1}^* T_{0M1} T_{0M2} T_{0M2}^*]|m \rangle |n\rangle = \\
			=\begin{cases}0,\quad m=n=0,\\ 
				\frac{n}{16}\left(\frac{1}{\sqrt{1/2}}+\frac{1}{\sqrt{3/2}}\right)^2\left(\frac{1}{\sqrt{n+1/2}}+\frac{1}{\sqrt{n-1/2}}\right)^2|0\rangle |n\rangle, \quad m=0, n\geq1,\\
				-\frac{m}{16}\left(\frac{1}{\sqrt{1/2}}+\frac{1}{\sqrt{3/2}}\right)^2\left(\frac{1}{\sqrt{m+1/2}}+\frac{1}{\sqrt{m-1/2}}\right)^2|m\rangle |0\rangle, \quad m\geq1, n=0,\\
				\left[ \frac{(m+1)n}{16} \left(\frac{1}{\sqrt{m+1/2}}+\frac{1}{\sqrt{m+3/2}}\right)^2\left(\frac{1}{\sqrt{n+1/2}}+\frac{1}{\sqrt{n-1/2}}\right)^2 - \right. \\ \left. \frac{m(n+1)}{16}\left(\frac{1}{\sqrt{m+1/2}}+\frac{1}{\sqrt{m-1/2}}\right)^2\left(\frac{1}{\sqrt{n+1/2}}+\frac{1}{\sqrt{n+3/2}}\right)^2 \right] |m\rangle |n\rangle, m\geq1,n\geq1
			\end{cases}	
	\end{aligned}\end{eqnarray}
	
	\begin{eqnarray}\begin{aligned}\label{sim0}
			[T_{0M1} T_{0M1}^* T_{0M2}^* T_{0M2} \ + \ T_{0M1}^* T_{0M1} T_{0M2} T_{0M2}^*]|m \rangle |n\rangle = \\
			=\begin{cases}0,\quad m=n=0,\\ 
				\frac{n}{16}\left(\frac{1}{\sqrt{1/2}}+\frac{1}{\sqrt{3/2}}\right)^2\left(\frac{1}{\sqrt{n+1/2}}+\frac{1}{\sqrt{n-1/2}}\right)^2|0\rangle |n\rangle, \quad m=0, n\geq1,\\
				\frac{m}{16}\left(\frac{1}{\sqrt{1/2}}+\frac{1}{\sqrt{3/2}}\right)^2\left(\frac{1}{\sqrt{m+1/2}}+\frac{1}{\sqrt{m-1/2}}\right)^2|m\rangle |0\rangle, \quad m\geq1, n=0,\\
				\left[ \frac{(m+1)n}{16} \left(\frac{1}{\sqrt{m+1/2}}+\frac{1}{\sqrt{m+3/2}}\right)^2\left(\frac{1}{\sqrt{n+1/2}}+\frac{1}{\sqrt{n-1/2}}\right)^2 + \right. \\ \left. \frac{m(n+1)}{16}\left(\frac{1}{\sqrt{m+1/2}}+\frac{1}{\sqrt{m-1/2}}\right)^2\left(\frac{1}{\sqrt{n+1/2}}+\frac{1}{\sqrt{n+3/2}}\right)^2 \right] |m\rangle |n\rangle, m\geq1,n\geq1
			\end{cases}	
	\end{aligned}\end{eqnarray}
	\par 
	Denote $\mathcal{H}_i' \ = \ \text{span} \{ |n\rangle \ | \ n \geq 1 \}$. 
	\begin{proposition}
		(i) The restriction to $\mathcal{H}_1' \otimes \mathcal{H}_2'$ of the operator 
		$$T_{0M1} T_{0M1}^* T_{0M2}^* T_{0M2} \ - \ T_{0M1}^* T_{0M1} T_{0M2} T_{0M2}^*$$ 
		is equal to $0$ up to a compact operator. 
		\par
		(ii) The restriction on $\mathcal{H}_1' \otimes \mathcal{H}_2'$ of the operator  
		$$T_{0M1} T_{0M1}^* T_{0M2}^* T_{0M2} \ + \ T_{0M1}^* T_{0M1} T_{0M2} T_{0M2}^*$$ 
		is equal to $2 I$ up to a compact operator. 
	\end{proposition}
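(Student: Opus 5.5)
The plan is to exploit the fact that on $\mathcal{H}_1'\otimes\mathcal{H}_2'$ both operators are diagonal in the product basis $|m\rangle|n\rangle$, $m,n\geq1$. I will then express their eigenvalues through the trace-class operator $L$ from \eqref{L}.

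\textbf{Step 1: rewrite the single-oscillator eigenvalues.} For $n\geq1$ put
$$\alpha_n=\frac{n}{4}\left(\frac{1}{\sqrt{n+1/2}}+\frac{1}{\sqrt{n-1/2}}\right)^2,\qquad \beta_n=\frac{n+1}{4}\left(\frac{1}{\sqrt{n+1/2}}+\frac{1}{\sqrt{n+3/2}}\right)^2 .$$
These are the eigenvalues of $T_0^\dagger T_0$ and $T_0T_0^\dagger$ by \eqref{TdagTnexact} and \eqref{TTdagnexact}. Comparing with \eqref{lambda} gives
$$\alpha_n=(1-\lambda_n)^2,\qquad \beta_n=(1-\lambda_{n+1})^2 .$$
So on $\mathcal{H}_i'$ we have
$$T_0^\dagger T_0=(I-L)^2=I+K_\alpha,\qquad T_0T_0^\dagger=(I-L')^2=I+K_\beta,$$
where $L'=VLV^\dagger$ and $K_\alpha=L^2-2L$, $K_\beta=L'^2-2L'$. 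Since $L$ is trace class by the Lemma, so are $L'$, $K_\alpha$ and $K_\beta$. In particular $\alpha_n-1=O(n^{-2})$ and $\beta_n-1=O(n^{-2})$ by \eqref{alphaasym}.

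\textbf{Step 2: expand the tensor products.} The operators in (i) and (ii) are $(I+K_\beta)\otimes(I+K_\alpha)\mp(I+K_\alpha)\otimes(I+K_\beta)$. Expanding, I expect:
\begin{itemize}
\item The $I\otimes I$ terms cancel in (i) and give $2I$ in (ii).
\item The products $K_\beta\otimes K_\alpha$ and $K_\alpha\otimes K_\beta$ are tensor products of trace-class operators, hence trace class and in particular compact.
\item What remains are the one-sided terms $K_\beta\otimes I+I\otimes K_\alpha\mp(K_\alpha\otimes I+I\otimes K_\beta)$.
\end{itemize}

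\textbf{Step 3: the one-sided terms (main obstacle).} Terms of the form $K\otimes I$ are \emph{not} compact on an infinite tensor product. Their diagonal entries do not decay along the strips where $m$ is fixed and $n\to\infty$.

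\emph{Case (ii).} I would first try to exhibit a cancellation. The one-sided terms combine to $(K_\alpha+K_\beta)\otimes I+I\otimes(K_\alpha+K_\beta)$, and no cancellation occurs there. I would therefore check numerically whether the strip limits $\alpha_m+\beta_m-2$ can be absorbed. Since these limits are $O(m^{-2})$ but nonzero, the compactness claim then hinges on how the finitely many low modes are treated.

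\emph{Case (i).} The one-sided terms become $(K_\beta-K_\alpha)\otimes I-I\otimes(K_\beta-K_\alpha)$. Here $K_\beta-K_\alpha$ is diagonal with entries $\beta_m-\alpha_m=O(m^{-3})$, which is again trace class but not zero.

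To close the argument in both cases, I would estimate the diagonal entries uniformly. I would show that for every $\varepsilon>0$ the set of $(m,n)$ with entry of modulus greater than $\varepsilon$ is finite, and then apply the criterion that a diagonal operator is compact if and only if its entries tend to $0$ along the product basis.

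I expect this uniform control along the boundary strips to be the decisive and hardest step. At that point I would compute the low-$m$ values $\beta_1-\alpha_1$, and their analogues for small $m$, explicitly, to decide whether the statement requires an additional finite-rank correction in the low modes.
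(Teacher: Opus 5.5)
\textbf{Verdict: there is a genuine gap, and it cannot be closed because the statement as written is false.}

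Your Step 3 stops exactly at the real difficulty. The step you propose for closing it cannot be carried out: the claim that only finitely many diagonal entries exceed $\varepsilon$ is false, and with it the proposition.

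Your Step 1 already contains what is needed. From $\alpha_n=(1-\lambda_n)^2$ and $\beta_n=(1-\lambda_{n+1})^2$ you get $\beta_n=\alpha_{n+1}$. Moreover, with $t=1/(2n)$,
\[
1-\lambda_n=\tfrac12\bigl((1-t)^{-1/2}+(1+t)^{-1/2}\bigr)
\]
is an even power series in $t$ with positive coefficients and constant term $1$. So $(\alpha_n)$ is strictly decreasing and $\alpha_n>1$ (incidentally, $\lambda_n<0$). Hence, for every fixed $m\geq1$, the entry of the operator in (i) on $|m\rangle|n\rangle$ satisfies
\[
\beta_m\alpha_n-\alpha_m\beta_n\longrightarrow\beta_m-\alpha_m=\alpha_{m+1}-\alpha_m<0\qquad(n\to\infty).
\]
For $m=1$ this limit is $\alpha_2-\alpha_1=\tfrac{8}{15}+\tfrac{2}{\sqrt{15}}-\tfrac23-\tfrac{1}{\sqrt3}\approx-0.194$. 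The orthonormal sequence $|1\rangle|n\rangle$ converges weakly to $0$, while its images stay bounded away from $0$ in norm. So the restricted operator in (i) is not compact. Likewise, the entries of the operator in (ii) minus $2I$ on $|1\rangle|n\rangle$ tend to $\alpha_1+\alpha_2-2\approx0.294$, so (ii) fails as well.

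A finite-rank correction in the low modes cannot help. The defect is carried by every strip $\{|m\rangle|n\rangle\}_{n\geq1}$, through the one-sided terms you isolated: $(K_\beta-K_\alpha)\otimes I$ in (i) and $(K_\alpha+K_\beta)\otimes I$ in (ii). The partner terms $I\otimes(\cdot)$ and the trace-class terms $K\otimes K$ have entries tending to $0$ along these strips, so they do not cancel the defect. Each value $\alpha_{m+1}-\alpha_m$ (resp. $\alpha_m+\alpha_{m+1}$) lies in the essential spectrum, which compact perturbations do not change.

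The paper's proof goes wrong at the same place. It expands $\beta_m\alpha_n\mp\alpha_m\beta_n$ jointly in $1/m$ and $1/n$ and reads off remainders such as $-\frac{3}{8m^3}+\frac{3}{8n^3}+\cdots$. These are small only when both indices are large, so it never controls the strips with one index bounded, which you correctly flagged, and it does not establish compactness.

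What your decomposition does prove is a corrected statement:
\begin{itemize}
\item the operator in (i) equals $(K_\beta-K_\alpha)\otimes I-I\otimes(K_\beta-K_\alpha)$ plus a trace-class operator;
\item the operator in (ii) equals $2I+(K_\alpha+K_\beta)\otimes I+I\otimes(K_\alpha+K_\beta)$ plus a trace-class operator;
\end{itemize}
with $K_\alpha,K_\beta$ trace class. Equivalently, for every $\varepsilon>0$ there is $N$ such that on $\overline{\operatorname{span}}\{|m\rangle|n\rangle: m,n\geq N\}$ the operator is within $\varepsilon$ in norm of $0$ (resp. $2I$). That high-excitation statement, not compactness, is what can be claimed.
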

	\begin{proof}
		Using the formula 
		$$(1+x)^{-\frac{1}{2}} \ = \ 1 - \frac{x}{2} + \frac{3x^2}{8} + O(x^3)$$
		we obtain: 
		$$\frac{m+1}{4m}\left[\frac{1}{\sqrt{m+1/2} + \sqrt{m+3/2}}\right]^2 \ = \ 1 + \frac{3}{16m^2}  + O\left(\frac{1}{m^3}\right)$$ 
		$$\frac{n}{4}\left[\frac{1}{\sqrt{n+1/2}} + \frac{1}{\sqrt{n-1/2}}\right]^2 \ = \ 1 + \frac{3}{16n^2} + O\left(\frac{1}{n^4}\right)$$
		Therefore 
		\begin{multline*} 
			\left[ \frac{(m+1)n}{16} \left(\frac{1}{\sqrt{m+1/2}}+\frac{1}{\sqrt{m+3/2}}\right)^2\left(\frac{1}{\sqrt{n+1/2}}+\frac{1}{\sqrt{n-1/2}}\right)^2 - \right. \\ \left. \frac{m(n+1)}{16}\left(\frac{1}{\sqrt{m+1/2}}+\frac{1}{\sqrt{m-1/2}}\right)^2\left(\frac{1}{\sqrt{n+1/2}}+\frac{1}{\sqrt{n+3/2}}\right)^2 \right] \ = \\ 
			\left[1 + \frac{3}{16m^2} - \frac{3}{8m^3} + O\left(\frac{1}{m^4}\right)\right]\left[1 + \frac{3}{16n^2} + O\left(\frac{1}{n^4}\right)\right] \ - \\ 
			\left[1 + \frac{3}{16m^2} + O\left(\frac{1}{m^4}\right)\right]\left[1 + \frac{3}{16n^2} - \frac{3}{8n^3} + O\left(\frac{1}{n^4}\right)\right] \ = \\ 
			-\frac{3}{8m^{3}} + \frac{3}{8n^{3}} + O\!\left(\frac{1}{m^{4}}\right) + O\!\left(\frac{1}{n^{4}}\right) + O\!\left(\frac{1}{m^{3}n^{2}}\right) + O\!\left(\frac{1}{m^2n^3}\right) 
		\end{multline*}
		hence, (i). 
		\par 
		Similarly, 
		\begin{multline*} 
			\left[ \frac{(m+1)n}{16} \left(\frac{1}{\sqrt{m+1/2}}+\frac{1}{\sqrt{m+3/2}}\right)^2\left(\frac{1}{\sqrt{n+1/2}}+\frac{1}{\sqrt{n-1/2}}\right)^2 + \right. \\ \left. \frac{m(n+1)}{16}\left(\frac{1}{\sqrt{m+1/2}}+\frac{1}{\sqrt{m-1/2}}\right)^2\left(\frac{1}{\sqrt{n+1/2}}+\frac{1}{\sqrt{n+3/2}}\right)^2 \right] \ = \\ 
			\left[1 + \frac{3}{16m^2} - \frac{3}{8m^3} + O\left(\frac{1}{m^4}\right)\right]\left[1 + \frac{3}{16n^2} + O\left(\frac{1}{n^4}\right)\right] \ + \\ 
			\left[1 + \frac{3}{16m^2} + O\left(\frac{1}{m^4}\right)\right]\left[1 + \frac{3}{16n^2} - \frac{3}{8n^3} + O\left(\frac{1}{n^4}\right)\right] \ = \\ 
			2 + \frac{3}{8}\left(\frac{1}{m^{2}}+\frac{1}{n^{2}}\right) - \frac{3}{8}\left(\frac{1}{m^{3}}+\frac{1}{n^{3}}\right) + \frac{9}{128}\frac{1}{m^{2}n^{2}} - \frac{9}{128}\left(\frac{1}{m^{3}n^{2}}+\frac{1}{m^{2}n^{3}}\right) + \\ 
			O\!\left(\frac{1}{m^{4}}\right) + O\!\left(\frac{1}{n^{4}}\right) + O\!\left(\frac{1}{m^{3}n^{3}}\right)
		\end{multline*}
		Hence, (ii). 
	\end{proof}
	
	Note that physically the phase of the zero-photon ground state $|0\rangle$ is not defined. 
	Therefore, it makes sense to restrict our operators to the Hilbert spaces $\mathcal{H}_i', \ i=1,2$.  

	\bigskip
	
	\noindent\textbf{Data availability statement.} Data availability not applicable since no data sets were generated during this research.\\
	
	\noindent\textbf{Declaration of conflict of interest.} The authors declare there is no conflict of interest in publishing the results obtained in this research.\\
	
	\noindent\textbf{Acknowledgements.} The first author is supported by the Ministry of Science, Technological Development and Innovations, Republic of Serbia, grant No. 451-03-66/2026-03/200029, and by the Bulgarian Ministry of Education
	and Science, Scientific Programme ``Enhancing the Research Capacity in Mathematical Sciences (PIKOM)"
	No. DO1-67/05.05.2022.
	\par
	The second author is partially supported by the Bulgarian National Science Fund under Grant No. KP-06-H92/6 (December 08, 2025) and by Grant No 80-10-31/01.04.2026 of the Research Fund of Sofia University. \\


\begin{thebibliography}{00}
		
		\bibitem{BP90} S. Barnett, D. Pegg {\it Quantum Theory of Rotation Angles}. Physical Review A, vol. 41, No 7, pp. 3427-3435, 1990. 
		\bibitem{CN65} P. Carruthers, M.M. Nieto {\it Coherent States and the Number-Phase Uncertainty Relations}. Physical Review Letters, vol. 14, No 11, pp. 387-389, 1965. 
		
		\bibitem{BD0} B. D. Djordjevi\'c, {\it On a singular Sylvester equation with unbounded self-adjoint $A$ and $B$}, Complex Analysis Operator Theroy 14:\textbf{43} (2020) https://doi.org/10.1007/s11785-020-01000-7.
		
		\bibitem{BD1} B. D. Djordjevi\'c, {\it Singular Lyapunov operator equations: application to $C^*-$algebras, Fr\'echet derivatives and abstract Cauchy problems}, Analysis Mathematical Physics (2021) 11:\textbf{160} https://doi.org/10.1007/s13324-021-00596-z
		
		\bibitem{BDND1} B. D. Djordjevi\'c and  N. \v C. Din\v ci\'c, {\it Solving the operator equation $AX-XB=C$ with closed $A$ and $B$}, Integral Equations Operator Theory \textbf{90} (51) (2018)
		https://doi.org/10.1007/s00020-018-2473-3.
		
		
		\bibitem{Kato} T. Kato, {\it Perturbation Theory for Linear Operators}. Grundlehren der Mathematischen Wissenschaften, 2nd ed. (Springer-Verlag, 1976). 
		\bibitem{NFM91} Noh, Fougeres, Mandel, {\it Measurement of the quantum phase by photon counting}. Physical Review Letters, vol. 67(11), pp. 1426-1429, 1991. 
		
		\bibitem{HMMO} H. Mohamad and M. Oliver, {\it $H^S$-class construction of an almost invariant slow subspace for the Klein-Gordon equation in the non-relativistic limit}, Journal of Mathematical Physics \textbf{59}, 051509 (2018)\\
		https://doi:10.1063/1.5027040.
		
		\bibitem{NFM92} Noh, Fougeres, Mandel, {\it Operational Approach to the Phase of a Quantum Field}. Phys. Rev. A, vol. 45, no. 1, pp. 424-443, Jan 1992.
		
		\bibitem{VQP} V. Q. Ph\'{o}ng, \textit{The operator equation $AX-XB=C$ with unbounded operators $A$ and $B$ and related abstract Cauchy problems}, Math. Z. \textbf{208} (1991), 567--588.
		
		\bibitem{KS} K. Schm\"udgen, {\it Trace Class Perturbations of Spectra of Self-adjoint Operators}. In: Unbounded Self-adjoint Operators on Hilbert Space. Graduate Texts in Mathematics, vol 265. (2012) Springer, Dordrecht. https://doi.org/10.1007/978-94-007-4753-1\_9
		
		\bibitem{LSJG} L. Susskind, J. Glogower, {\it Quantum Mechanical Phase and Time Operator}, Physics Vol. 1, No. 1, pp. 49--61, 1964.

        \bibitem{W15} S. Weinberg, {\it Lectures on Quantum Mechanics, Second Edition}, Cambridge University Press, 2015.
		
		\bibitem{GT} G. Teschl, \textit{Mathematical Methods in Quantuum Mechanics, with applications to Schr$\ddot{o}$dinger perators}, Amer. Math. Soc. Providence, Rhode Island, 2009.
		
	\end{thebibliography}
\end{document}